\newcommand{\zzy}[1]{\textcolor{black}{#1}}
\theoremstyle{plain}
\newtheorem{theorem}{Theorem}[section]
\theoremstyle{definition}
\theoremstyle{remark}
\author{
Zhengyang Zhou\textsuperscript{a}, 
Yunrui Li\textsuperscript{a}, 
Pengyu Hong\textsuperscript{a}, 
Hao Xu\textsuperscript{b}\thanks{Corresponding Author: \texttt{haxu@bwh.harvard.edu}} \\
\textsuperscript{a}Department of Computer Science, Brandeis University, Waltham, MA 02453, USA \\
\textsuperscript{b}Department of Medicine, Harvard Medical School, Boston, MA 02115, USA
}
\title{Multimodal Fusion with Relational Learning for Molecular Property Prediction }
\begin{document}

\maketitle







\begin{abstract}
Graph-based molecular representation learning is essential for predicting molecular properties in drug discovery and materials science. Despite its importance, current approaches struggle with capturing the intricate molecular relationships and often rely on limited chemical knowledge during training. Multimodal fusion, which integrates information from molecular graph and other data modalities, has emerged as a promising avenue for enhancing molecular property prediction. However, existing studies have explored only a narrow range of modalities, and the optimal integration stages for multimodal fusion remain largely unexplored. Furthermore, the reliance on auxiliary modalities poses challenges, as such data is often unavailable in downstream tasks. Here, we present MMFRL (Multimodal Fusion with Relational Learning), a framework designed to address these limitations by leveraging relational learning to enrich embedding initialization during multimodal pre-training. MMFRL enables downstream models to benefit from auxiliary modalities, even when these are absent during inference. We also systematically investigate modality fusion at early, intermediate, and late stages, elucidating their unique advantages and trade-offs. Using the MoleculeNet benchmarks, we demonstrate that MMFRL significantly outperforms existing methods with superior accuracy and robustness. Beyond predictive performance, MMFRL enhances explainability, offering valuable insights into chemical properties and highlighting its potential to transform real-world applications in drug discovery and materials science.

\end{abstract}

\section{Introduction}

\label{submission}

Graph representation learning for molecules  has gained significant attention in drug discovery and materials science, as it effectively encapsulates molecular structures and enables the effective investigation of structure-activity relationships~\citep{schneider2020rethinking,wieder2020compact, zhang2022graph, fang2022geometry, wang2023motif,chen2024drugdagt}. In this paradigm, atoms are usually treated as nodes and chemical bonds as edges, effectively encapsulating the connectivity that define molecular behaviors. However, it poses significant challenges due to intricate relationships among molecules and the limited chemical knowledge utilized during training.


\begin{figure*}[ht!]
    \centering
    \includegraphics[width=0.95\textwidth]{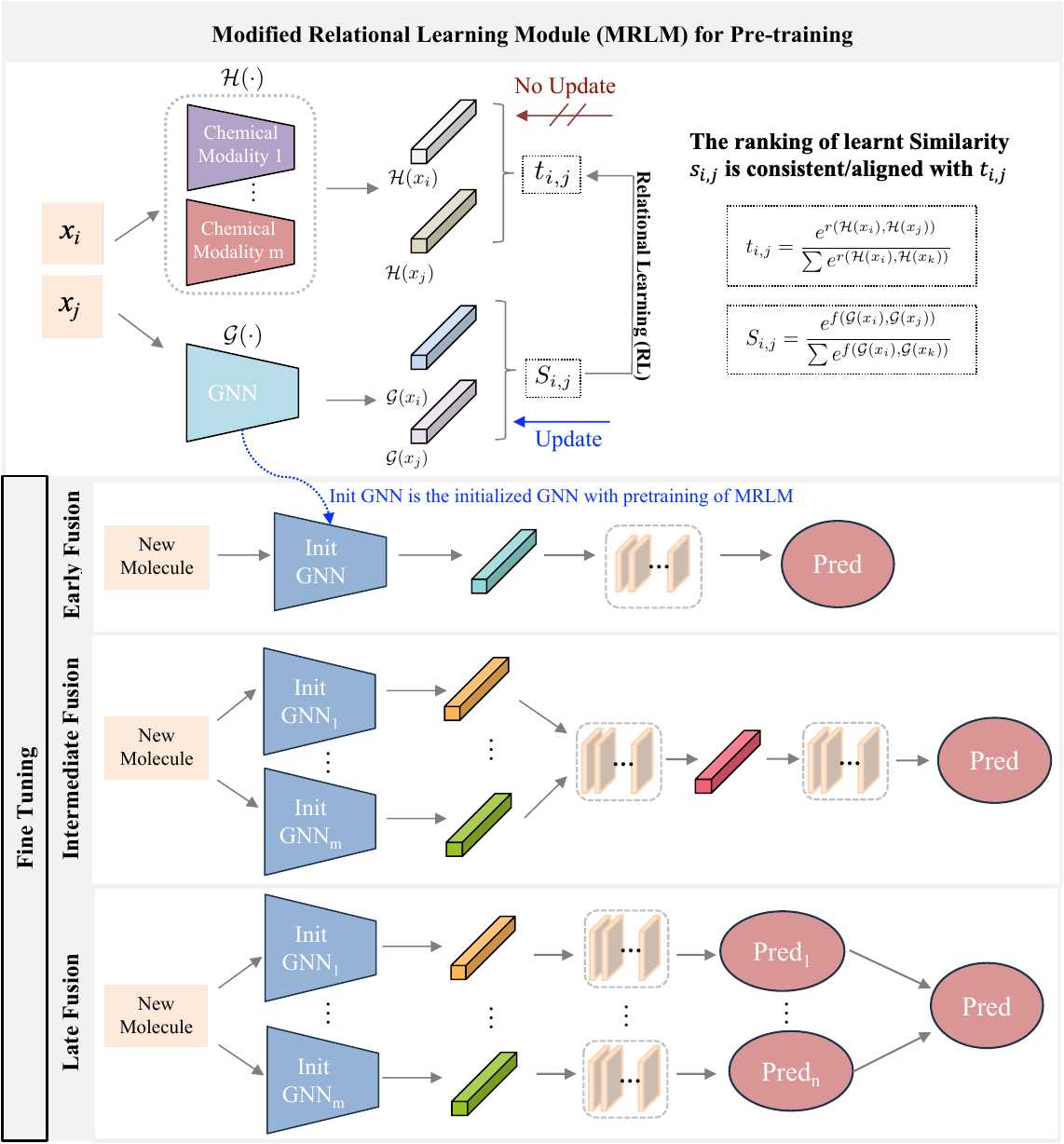}
    \caption{\textbf{Multimodal Fusion with Relational Learning for Molecular Property Prediction (MMFRL).} This figure shows our proposed idea about how to transfer the knowledge from other modalities and use fusion to improve the performance further. Unlike the general contrastive learning framework shown in Appendix Figure \ref{fig:traditional-cl}, MMFRL does not need to define positive or negative pairs and is capable of learning continuous ordering from target similarity. In Early Fusion, a single Init GNN is created by combining all modality information during pretraining. In Intermediate and Late Fusion, each modality has its own initialized GNN.}
    \label{fig:main-structure}
\end{figure*}

Contrastive Learning (CL) is often employed to study relationships among molecules. The primary focus within the domain of contrastive learning applied to molecular graphs centers on 2D-2D graph comparisons. Noteworthy representative examples: InfoGraph ~\citep{sun2019infograph} maximizes the mutual information between the representations of a graph and its substructures to guide the molecular representation learning;
GraphCL ~\citep{you2020graph}, MoCL ~\citep{sun2021mocl}, and MolCLR ~\citep{wang2022molecular} employs graph augmentation techniques to construct positive pairs; MoLR ~\citep{wang2022chemicalreactionaware}  establishes positive pairs with reactant-product relationships. In addition to 2D-2D graph contrastive learning, there are also noteworthy efforts exploring 2D-3D and 3D-3D contrastive learning in the field.
3DGCL ~\citep{moon20233d} is 3D-3D contrastive learning model, establishing positive pairs with conformers from the same molecules. GraphMVP ~\citep{liu2022pretraining}, GeomGCL ~\citep{li2022geomgcl}, and 3D Informax ~\citep{stark20223d} proposes 2D–3D view contrastive learning approaches. To conclude, 2D-2D and 3D-3D comparisons are intra-modality contrastive learning, as only one graph encoder is employed in these studies. However, these approaches often focus on  the motif and graph levels, leaving atom-level contrastive learning less explored. For example, consider Thalidomide: while the \textit{(R)}- and \textit{(S)}-enantiomers share the same topological graph and differ only at a single chiral center, their biological activities are drastically different—the \textit{(R)}-enantiomer is effective in treating morning sickness, whereas the \textit{(S)}-enantiomer causes severe birth defects. In other words, the \textit{(R)}- and \textit{(S)}-enantiomers are similar in terms of topological stucture but dissimilar in terms of  biological activities. Thus, a more sophisticated approach is required to tackle these scenarios. A potential solution would be to use continuous metrics within a multi-view space, enabling a more comprehensive understanding of these complex molecular relationships.

There are multiple approaches for such similarity learning. One approach of them is instance-wise discrimination, which involves directly assessing the similarity between instances based on their latent representations or features. ~\citep{wu2018unsupervised}. Naive instance-wise discrimination relies on pairwise similarity, leading to the development of contrastive loss ~\citep{hadsell2006dimensionality}. Although there are improved loss functions such as triplet loss ~\citep{hoffer2015deep}, quadruplet loss ~\citep{law2013quadruplet}, lifted structure loss ~\citep{oh2016deep}, N-pairs loss ~\citep{sohn2016improved}, and angular loss ~\citep{wang2017deep}, these methods still fall short in thoroughly capturing relationships among multiple instances simultaneously ~\citep{wang2019multi}. To address this limitation, a joint multi-similarity loss has been proposed, incorporating pair weighting for each pair to enhance instance-wise discrimination ~\citep{wang2019multi, zhang2021jointly}. However, these pair weightings requires the manual categorization of negative and positive pairs, as distinct weights are assigned to losses based on their categories. In this case, we can borrow the idea of Relational Learning ~\citep{zheng2021ressl} from computer vision by using different augmented views of the same instance tasks to similar features, while allowing for some variability. This approach captures the essential characteristics of the instance in a continuous scale, promoting relative consistency across the views without requiring them to be identical. By doing so, it enhances the model's ability to generalize and recognize underlying patterns in the data.

Besides, in order to enable a multi-view analysis from diverse sources is essential for improving molecule analysis, we can apply the Multi-Modality Fusion~\citep{lahat2015multimodal, khaleghi2013multisensor, poria2015deep, ramachandram2017deep, Pawlowski2023, manzoor2023multimodality, priessner2024enhancing}. It combines diverse heterogeneous data (e.g. text, images, graph) to create a more comprehensive understanding of complex scenarios . This approach leverages the strengths of each modality, potentially improving performance in tasks like sentiment analysis or medical diagnosis. Although challenging to implement due to the need to align different data streams, successful fusion can provide insights that that surpass those obtainable from individual modalities, advancing AI and data-driven decision-making. In particular, the way to fuse different modality should also depends on the dominance of each modality ~\citep{Pawlowski2023}. However, when it comes to multimodal learning for molecules, we often encounter data availability and incompleteness issues. This raises a critical question: how can multimodal information be effectively leveraged for molecular property reasoning when such data is absent in downstream tasks? Recent studies have demonstrated the effectiveness of pretraining molecular Graph Neural Networks (GNNs) by integrating additional knowledge sources~\citep{wang2021molecular, wang2022molecular, liu2022attention, xu2023asymmetric}. Building on this foundation, a promising solution is to pretrain multiple replicas of molecular GNNs, with each replica dedicated to learning from a specific modality. This approach allows downstream tasks to benefit from multimodal data that is not  accessible during fine-tuning, ultimately improving representation learning.

Facing these challenges and opportunities, we propose  MMFRL  (Multimodal Fusion
with Relational Learning for Molecular Property Prediction), a novel framework
features relational learning (RL) and multimodal fusion (MMF). RL utilizes a continuous relation metric to evaluate relationships among instances in the feature space  ~\citep{balcan2006theory, wen2023pairwise}. Our major contribution comprises three aspects: 
\textbf{\textit{Conceptually:}} We introduce a modified relational learning metric for molecular graph representation that offers a more comprehensive and continuous perspective on inter-instance relations, effectively capturing both localized and global relationships among instances. To the best of our knowledge, this is the first work to demonstrate such generalized relational learning metric for molecular graph representation.
\textbf{\textit{Methodologically:}} 
Our proposed modified relational metric captures complex relationships by converting pairwise self-similarity into relative similarity, which evaluates how the similarity between two elements compares to the similarity of other pairs in the dataset. In addition, we integrate these metrics into a fused multimodal representation, which has the potential to enhance performance, allowing downstream tasks to leverage modalities that are not directly accessible during fine-tuning.
\textbf{\textit{Empirically:}} 
MMFRL excels in various downstream tasks for Molecular Property Predictions. Last but not least, we demonstrate the explainability of the learned representations through two post-hoc analysis. Notably, we explore minimum positive subgraphs and maximum common subgraphs to gain insights for further drug molecule design.

\section{Results}

\subsection{The effectiveness of pre-training}
We first illustrate the impact of pre-training initialization on performance on DMPNN ~\citep{yang2019analyzing}. As shown in Table \ref{tab:multimodal-metrics-single-level}, the average performance of pre-trained models outperform the non-pre-trained model in all tasks except for Clintox. The results of various downstream tasks indicate that different tasks may prefer different modalities. Notably, the model pre-trained with the NMR modality achieves the highest performance across three classification tasks. Similarly, the model pre-trained with the Image modality excels in three tasks, two of which are regression tasks related to solubility, aligning with findings from prior literature \citep{xu2023asymmetric}. Additionally, the model pre-trained with The fingerprint method achieves the best performance in two tasks, including MUV, which has the largest dataset. 

\setlength{\tabcolsep}{1.5pt}
\renewcommand{\arraystretch}{1.1}
\begin{table*}[t]
\caption{Study on the performances of $\text{MMFRL}_{Unimodality}$. The best results are denoted in bold, and the second-best are indicated with underlining among the five modalities. The first 8 tasks are for classification under evaluation of ROC-AUC, while the last three are for regression with evaluation of RMSE. }
\label{tab:multimodal-metrics-single-level}
\begin{center}
\begin{tiny}
\begin{sc}
\begin{tabular}{lcccccccc|ccc}
\toprule
Data Set & BBBP & bace & Sider & Clintox & HIV & MUV & Tox21 & ToxCast & ESOL & FreeSolv & Lipo \\
\midrule
SMILES & {92.9$\pm$1.5} & 90.9$\pm$3.3 & 64.9$\pm$0.3 & 78.2$\pm$1.9  & \textbf{83.3$\pm$1.1} & 80.1$\pm$2.5 & \underline{85.7$\pm$1.2} &70.5$\pm$2.5 & 0.811$\pm$ 0.109 & \underline{1.623$\pm$ 0.168} & \underline{0.539$\pm$ 0.017}\\

$\text{NMR}_\text{spectrum}$ & 91.0$\pm$2.0 & \textbf{93.2$\pm$2.7} & \textbf{68.1$\pm$1.5} & \textbf{87.7$\pm$6.5}  & 80.9$\pm$5.0 & \underline{80.9$\pm$5.0} &  85.1$\pm$0.4 & \textbf{71.1$\pm$0.8} & 0.844$\pm$ 0.123 & 2.417$\pm$ 0.495 & 0.609$\pm$ 0.031 \\

Image & \underline{93.1$\pm$2.4} & \underline{92.9$\pm$1.8} & 65.3$\pm$1.5 & 86.2$\pm$6.5 & \underline{82.3$\pm$0.6} & 78.7$\pm$1.7 & \textbf{86.0$\pm$1.0} & \underline{71.0$\pm$1.6} & \textbf{0.761$\pm$ 0.068} & 1.648$\pm$ 0.045 & \textbf{0.537$\pm$ 0.005} \\

Fingerprint & 92.9$\pm$2.3 & 91.7$\pm$3.6  & \underline{65.6$\pm$0.7} & \underline{87.5$\pm$6.0} & 81.2$\pm$2.5 & \textbf{82.9$\pm$3.1} & 85.3$\pm$1.3 &70.0$\pm$1.4 & \underline{0.808$\pm$ 0.071} & \textbf{1.437$\pm$ 0.134} & 0.565$\pm$ 0.017\\

$\text{NMR}_\text{Peak}$ & \textbf{93.4$\pm$2.7} & 89.3$\pm$1.7 & 62.8$\pm$2.1 & 86.1$\pm$5.4& 82.1$\pm$0.4 &75.4$\pm$5.2 &84.9$\pm$1.0 & 70.6$\pm$0.8 & 0.924$\pm$0.083 &1.707$\pm$0.126 & 0.587$\pm$0.021\\
\hline

Average & 92.8$\pm$1.9 & 91.4$\pm$2.7 & 65.3$\pm$2.0 & 85.0$\pm$5.7 & 81.8$\pm$2.2 & 79.4$\pm$4.0 & 85.4$\pm$0.9 & 70.6$\pm$1.3 & 0.830$\pm$0.094 & 1.766$\pm$0.394 & 0.586$\pm$0.048\\

\textcolor{black}{Max} & \textcolor{black}{93.4$\pm$2.7} & \textcolor{black}{93.2$\pm$2.7} & \textcolor{black}{68.1$\pm$1.5} & \textcolor{black}{87.7$\pm$6.5} & \textcolor{black}{83.3$\pm$1.1} & \textcolor{black}{82.9$\pm$3.1} & \textcolor{black}{86.0$\pm$1.0} & \textcolor{black}{71.1$\pm$0.8} & \textcolor{black}{0.761$\pm$ 0.068}  &  \textcolor{black}{1.437$\pm$ 0.134} & \textcolor{black}{0.537$\pm$ 0.005} \\
No Pre-training & 91.9$\pm$3.0 & 85.2$\pm$0.6 & 57.0$\pm$0.7 & 90.6$\pm$0.6 & 77.1$\pm$0.5 & 78.6$\pm$1.4 & 75.9$\pm$0.7 & 63.7$\pm$0.2 & 1.050$\pm$0.008 & 2.082$\pm$0.082 & 0.683$\pm$0.016 \\ 

\bottomrule
\end{tabular}
\end{sc}
\end{tiny}
\end{center}
\end{table*}

\subsection{Overall performance of $\text{MMFRL}$}
As shown in Table~\ref{table:overall_performance} and Table~\ref{table:overall_performance_regression}, MMFRL demonstrates superior performance compared to all baseline models and the average performance of DMPNN pretrained with extra modalities across all 11 tasks evaluated in MoleculeNet. {Results in Tables~\ref{tab:dude_auc_comparison} and Table~\ref{tab:performance_lit_pcba_transposed} demonstrates our great performance compared to the baseline models on the Dud-E \citep{mysinger2012} and LIT-PCBA \citep{trannguyen2020} datasets.} This robust performance highlights the effectiveness of our approach in leveraging multimodal data. In particular, while individual models pre-trained on other modalities for ClinTox fail to outperform the No-pretraining model, the fusion of these pre-trained models leads to improved performance. Besides, apart from Tox21 and Sider, the fusion models significantly enhances overall performance. In particular, the intermediate fusion model stands out by achieving the highest scores in seven distinct tasks, showcasing its ability to effectively combine features at a mid-level abstraction. the late fusion model achieves the top performance in two tasks. These results underscore the advantages of utilizing various fusion strategies in multimodal learning, further validating the efficacy of the MMFRL framework.

\setlength{\tabcolsep}{5pt}
\begin{table*}[t!]
\caption{Overall performances (ROC-AUC) on classification downstream tasks. The best results are denoted in bold, and the second-best are indicated with underlining. For early fusion of MMFRL, all the predefined weight of each modality are 0.2. (Note: N-Gram is highly time-consuming on ToxCast.) }
\vspace{2pt}
\label{table:overall_performance}
\begin{center}
\begin{scriptsize}
\begin{sc}
\begin{tabular}{lcccccccc}
\toprule
Data Set & BBBP & bace & Sider & Clintox & HIV & MUV & Tox21 & ToxCast \\
\midrule
AttentiveFP & 64.3$\pm$1.8 & 78.4$\pm$2.2  & 60.6$\pm$3.2 & 84.7$\pm$0.3 & 75.7$\pm$1.4 & 76.6$\pm$1.5 & 76.1$\pm$0.5 & 63.7$\pm$0.2  \\


DMPNN & 91.9$\pm$3.0 & 85.2$\pm$0.6 & 57.0$\pm$0.7 & 90.6$\pm$0.6 & 77.1$\pm$0.5 & 78.6$\pm$1.4 & 75.9$\pm$0.7 & 63.7$\pm$0.2  \\


\text{N-Gram} & 91.2$\pm$0.3 & 79.1$\pm$1.3 & 63.2$\pm$0.5 & 87.5$\pm$2.7 & 78.7$\pm$0.4  & 76.9$\pm$0.7 & 76.9$\pm$2.7 & -\\


GEM & 72.4$\pm$0.4 & 85.6$\pm$1.1 & \underline{67.2$\pm$0.4} & 90.1$\pm$1.3 & 80.6$\pm$0.9 & 81.7$\pm$0.5 & 78.1$\pm$0.1 & 69.2$\pm$0.4\\

Uni-Mol & 72.9$\pm$0.6 & 85.7$\pm$0.2 & 65.9$\pm$1.3 & \underline{91.9$\pm$1.8} & 80.8$\pm$0.3 & {82.1$\pm$1.3} & 79.6$\pm$0.5 & 69.6$\pm$0.1 \\


InfoGraph & 69.2$\pm$0.8 & 73.9$\pm$2.5 & 59.2$\pm$0.2 & 75.1$\pm$5.0 & 74.5$\pm$1.8 &  74.0$\pm$1.5 & 73.0$\pm$0.7 & 62.0$\pm$0.3 \\

GraphCL & 67.5$\pm$3.3 & 68.7$\pm$7.8 & 60.1$\pm$1.3 & 78.9$\pm$4.2 & 75.0$\pm$0.4 & 77.1$\pm$1.0 & 75.0$\pm$0.3 & 62.8$\pm$0.2 \\


MolCLR & 73.3$\pm$1.0 & 82.8$\pm$0.7 & 61.2$\pm$3.6 & 89.8$\pm$2.7 & 77.4$\pm$0.6 & 78.9$\pm$2.3 & 74.1$\pm$5.3 & 65.9$\pm$2.1 \\

$\text{MolCLR}_{\text{cmpnn}}$ & 72.4$\pm$0.7 & 85.0$\pm$2.4 & 59.7$\pm$3.4 & 88.0$\pm$4.0 & 77.8$\pm$5.5 & 74.5$\pm$2.1 & 78.4$\pm$2.6 & 69.1$\pm$1.2 \\

GraphMVP & 72.4$\pm$1.6 & 81.2$\pm$9.0 & 63.9$\pm$1.2 & 79.1$\pm$2.8 & 77.0$\pm$1.2 & 77.7$\pm$6.0 & 75.9$\pm$5.0 & 63.1$\pm$0.4\\

\hline

$\text{Unimodality}_{avg}$& 92.8$\pm$1.9 & 91.4$\pm$2.7 & {65.3$\pm$2.0} & 85.0$\pm$5.7 & 81.8$\pm$2.2 & 79.4$\pm$4.0 & \underline{85.4$\pm$0.9} & {70.6$\pm$1.3} \\

$\textcolor{black}{Unimodality}_{\textcolor{black}{Max}}$& \textcolor{black}{93.4$\pm$2.7} & \textcolor{black}{93.2$\pm$2.7} & \textbf{\textcolor{black}{68.1$\pm$1.5}} & \textcolor{black}{87.7$\pm$6.5} & \textbf{\textcolor{black}{83.3$\pm$1.1}} & \underline{\textcolor{black}{82.9$\pm$3.1}} & \textbf{\textcolor{black}{86.0$\pm$1.0}} & \underline{\textcolor{black}{71.1$\pm$0.8}} \\

$\text{MMFRL}_{early}$& 91.6$\pm$5.0 & \underline{94.3$\pm$2.4} & {66.4$\pm$1.9} & 85.3$\pm$6.8 & {82.0$\pm$2.4} & 80.6$\pm$3.2 & {85.2$\pm$0.2} & 69.8$\pm$1.1 \\

$\text{MMFRL}_{intermediate}$& \textbf{95.4$\pm$0.7} & \textbf{95.1$\pm$1.0} & {64.3$\pm$1.2} & \textbf{93.4$\pm$1.1} & {81.2$\pm$1.3} &\textbf{83.5$\pm$1.6} & {85.1$\pm$0.1} & \textbf{71.9$\pm$1.1}  \\


$\text{MMFRL}_{late}$& \underline{94.7$\pm$0.6} & {91.6$\pm$2.6} & {64.2$\pm$1.2} & {87.0$\pm$0.4} & \underline{82.9$\pm$0.2} &82.1$\pm1.7$ & {77.7$\pm$0.5} & {70.2$\pm$0.3}  \\

\bottomrule
\end{tabular}
\end{sc}
\end{scriptsize}
\end{center}
\end{table*}
\vspace{-10pt} 
\begin{table}[t]
\setlength{\tabcolsep}{6pt}
\caption{Overall performances (RMSE) on regression downstream tasks. The best results are denoted in bold, and the second-best are indicated with underlining. For early fusion of MMFRL, all the predefined weight of each modality are 0.2.}
\label{table:overall_performance_regression}
\begin{center}
\begin{scriptsize}
\begin{tabular}{lccc}
\toprule
Data Set & ESOL & FreeSolv & Lipo  \\
\midrule
AttentiveFP & 0.877$\pm$0.029 & 2.073$\pm$0.183 & 0.721$\pm$0.001 \\

DMPNN & 1.050$\pm$0.008 & 2.082$\pm$0.082 & 0.683$\pm$0.016\\

$\text{N-Gram}_{\text{RF}}$  & 1.074$\pm$0.107 & 2.688$\pm$0.085 & 0.812$\pm$0.028 \\

$\text{N-Gram}_{\text{XGB}}$ & 1.083$\pm$0.082 & 5.061$\pm$0.744 & 2.072$\pm$0.030   \\

GEM & 0.798$\pm$0.029 & 1.877$\pm$0.094 & 0.660$\pm$0.008 \\

Uni-Mol & {0.788$\pm$0.029} & 1.620$\pm$0.035 & \textcolor{black}{0.603$\pm$0.010} \\





MolCLR & 1.113$\pm$0.023 & 2.301$\pm$0.247 & 0.789$\pm$0.009  \\

$\text{MolCLR}_{\text{CMPNN}}$  & 0.911$\pm$0.082 & 2.021$\pm$0.133 & 0.875$\pm$0.003 \\


\hline
$\text{Unimodality}_{avg}$ & 0.924$\pm$0.083 & {1.707$\pm$0.126} & 0.587$\pm$0.021  \\
\textcolor{black}{${Unimodality}_{Max}$} & \underline{\textcolor{black}{0.761$\pm$ 0.068}}  &  \textbf{\textcolor{black}{1.437$\pm$ 0.134}} & \underline{\textcolor{black}{0.537$\pm$ 0.005}} \\
$\text{MMFRL}_{early}$ & 1.037$\pm$0.170 & 2.093$\pm$0.090 & 0.607$\pm$0.034 \\
$\text{MMFRL}_{intermediate}$ & \textbf{0.730$\pm$0.019} & \underline{1.465$\pm$0.096} & {0.552$\pm$0.014} \\
$\text{MMFRL}_{late}$ & {0.763$\pm$0.035} & 1.741$\pm$0.191 & \textbf{0.525$\pm$0.018} \\
\bottomrule
\end{tabular}
\vspace{-8pt} 
\end{scriptsize}
\end{center}
\end{table}

\setlength{\tabcolsep}{1.5pt}
\renewcommand{\arraystretch}{1.1}
\begin{table}[t]
\caption{{Study on the performances of $\text{MMFRL}_{Intermediate}$ with different contrastive loss functions: Contrastive Loss (CL) and Tripelet Loss (TL). The best results are denoted in bold. The first 8 datasets are classification tasks evaluated using ROC-AUC, while the last three are regression tasks evaluated using RMSE. Our model outperforms other loss functions in most of the datasets.}}
\begin{center}
\begin{tiny}
\begin{sc}
\begin{tabular}{lcccccccc|ccc}
\toprule
Data Set & BBBP & bace & Sider & Clintox & HIV & MUV & Tox21 & ToxCast & ESOL & FreeSolv & Lipo \\
\midrule
$\text{MMFRL}_{intermediate}$& \textbf{95.4$\pm$0.7} & \textbf{95.1$\pm$1.0} & \textbf{64.3$\pm$1.2} & \textbf{93.4$\pm$1.1} & \textbf{81.2$\pm$1.3} &\textbf{83.5$\pm$1.6} & {85.1$\pm$0.1} & \textbf{71.9$\pm$1.1} & \textbf{0.730$\pm$0.019} & \textbf{1.465$\pm$0.096} & \textbf{0.552$\pm$0.014}   \\
\textcolor{black}{$\text{MMFRL}_{intermediate CL}$}&\textcolor{black}{93.2$\pm$1.5} & \textcolor{black}{89.7$\pm$1.3} & \textcolor{black}{61.1$\pm$2.6} & \textcolor{black}{90.6$\pm$1.7} & \textcolor{black}{80.9$\pm$1.7} & \textcolor{black}{78.2$\pm$1.3} & \textbf{\textcolor{black}{85.7$\pm$0.4}} & \textcolor{black}{70.8$\pm$0.8} & \textcolor{black}{0.792$\pm$0.034} & \textcolor{black}{2.094$\pm$0.377} & \textcolor{black}{0.609$\pm$0.022}  \\

\textcolor{black}{$\text{MMFRL}_{intermediate TL}$}& \textcolor{black}{93.2$\pm$2.2} & \textcolor{black}{91.3$\pm$1.2} & \textcolor{black}{61.8$\pm$1.7} & \textcolor{black}{91.8$\pm$2.5} & \textcolor{black}{80.0$\pm$1.5} & \textcolor{black}{78.8$\pm$0.2} & \textcolor{black}{85.6$\pm$0.5} & \textcolor{black}{70.7$\pm$0.4} &  \textcolor{black}{0.780$\pm$0.037} & \textcolor{black}{2.072$\pm$0.199} & \textcolor{black}{0.577$\pm$0.005} \\
\bottomrule
\end{tabular}
\end{sc}
\end{tiny}
\end{center}
\label{tab:intermediate-contrastive-loss-ablation}
\end{table}

\begin{table}[ht]
\centering

\caption{{Comparison of Average AUC-ROC across DUD-E dataset}}
{
\begin{tabular}{l c}
\toprule
\textbf{Method} & \textbf{Average AUC-ROC} \\
\midrule
COSP \citep{gao2022cosp} & 90.10 \\
Graph CNN \citep{torng2019gcn} & 88.60 \\
Drug VQA \citep{zheng2020vqas} & 97.20\\
AttentionSiteDTI \citep{yazdani2022attentionsite} & \underline{97.10} \\
$\text{DrugCLIP}_{FT}$ \citep{gao2023drugclip} & 96.59 \\
{MMRFL-Intermediate (Ours)} & \textbf{98.32} \\
{MMRFL-Late (Ours)} & {96.78} \\
\bottomrule
\end{tabular}
}
\label{tab:dude_auc_comparison}
\end{table}

\begin{table*}[ht]
\centering

\caption{{Overall performances (ROC-AUC) on classification downstream tasks for $\text{Lit-PCBA}$. The performance that are not by us is from the paper \cite{cai2022fpgnn}}}
{
\begin{tabular}{lcccccccccc}
\toprule
Task & NB & SVM & RF & XGBoost & DNN & GCN & GAT & FP-GNN & MMRFL\_Int & MMRFL\_lat \\
\midrule
ADRBZ & 55.2 & 53.4 & 49.8 & 50.0 & 83.3 & \underline{83.7} & 76.8 & \textbf{88.6} & 76.3 & 76.0 \\
ALDH1 & 69.3 & 76.0 & 74.1 & 75.0 & 75.6 & 73.0 & 73.9 & 76.6 & \textbf{78.8} & \underline{78.7} \\
ESR1\_ago & 66.1 & 55.2 & 44.8 & 50.0 & 69.0 & 58.7 & 71.3 & \underline{72.8} & \textbf{76.8} & 31.5 \\
ESR1\_ant & 54.3 & 63.0 & 53.3 & 52.8 & 58.2 & \textbf{67.1} & \underline{65.6} & 64.2 & 54.3 & 58.8 \\
FEN1 & 87.6 & 87.7 & 65.7 & 88.8 & \underline{90.1} & 89.7 & 88.8 & 88.9 & \textbf{90.5} & 85.4 \\
GBA & 70.9 & \underline{77.8} & 59.9 & \textbf{83.0} & 77.7 & 73.5 & 77.6 & 75.1 & 77.3 & 75.3 \\
IDH1 & \textbf{88.7} & 80.7 & 49.8 & 50.0 & 67.8 & 81.3 & \underline{86.1} & 78.7 & 71.0 & 41.6 \\
KAT2A & 65.9 & 61.2 & 53.7 & 50.0 & 59.5 & 62.1 & 66.2 & 63.2 & \textbf{71.6} & \underline{68.7} \\
MAPK1 & 68.6 & 66.5 & 57.9 & 59.3 & 70.8 & 66.8 & 69.7 & \textbf{77.1} & \underline{73.0} & 69.6 \\
MTORC1 & 59.8 & 59.1 & 53.2 & 63.9 & 63.4 & \textbf{66.9} & 61.5 & 58.3 & \underline{62.0} & 59.0 \\
OPRK1 & 53.8 & 53.2 & 49.8 & 50.0 & \underline{71.0} & 64.4 & 63.6 & 54.5 & \textbf{72.8} & 68.6 \\
PKM2 & 68.4 & \underline{75.3} & 58.1 & 73.7 & 71.9 & 63.6 & 72.4 & 73.2 & \textbf{77.2} & 71.4 \\
PPARG & 67.5 & 79.3 & 66.9 & 49.0 & \underline{81.1} & 78.8 & 78.4 & \textbf{82.9} & 69.7 & 69.5 \\
TP53 & 64.8 & 60.2 & 60.2 & 64.3 & 70.6 & 74.9 & 70.6 & \underline{76.3} & \textbf{80.7} & 57.3 \\
VDR & 80.4 & 69.0 & 64.4 & \underline{78.2} & \textbf{79.4} & 77.3 & 78.0 & 77.4 & 73.9 & 73.5 \\
\bottomrule
\end{tabular}
}
\label{tab:performance_lit_pcba_transposed}
\end{table*}

\setlength{\tabcolsep}{4pt}

\vspace{10pt}
\subsection{Analysis of the fusion effect}

\subsubsection{General Comparison among various Ways of Fusions}
Early Fusion is employed during the pretraining phase and is easy to implement, as it aggregates information from different modalities directly. However, its primary limitation lies in the necessity for predefined weights assigned to each modality. These weights may not accurately reflect the relevance of each modality for the specific downstream tasks, potentially leading to suboptimal performance.

Intermediate Fusion is able to capture the interaction between modalities early in the fine-tuning process, allowing for a more dynamic integration of information. This method can be particularly beneficial when different modalities provide complementary information that enhances overall performance. If the modalities effectively compensate for one another's strengths and weaknesses, Intermediate Fusion may emerge as the most effective approach.

In contrast, Late Fusion enables each modality to be explored independently, maximizing the potential of individual modalities without interference from others. This separation allows for a thorough examination of each modality's contribution. When certain modalities dominate the performance metrics, Late Fusion can capitalize on their strengths by effectively leveraging the most informative signals.. This approach is especially useful in scenarios where the dominance of specific modalities can be leveraged to enhance overall model performance. 

\textcolor{black}{In addition, we conduct an ablation study to evaluate the performance of our proposed loss functions against two traditional contrastive learning losses—Contrastive Loss and Triplet Loss—in the context of intermediate fusion. The experimental results as shown in Table~\ref{tab:intermediate-contrastive-loss-ablation} demonstrate that our proposed methods outperform the baseline approaches across the majority of tasks in the MoleculeNet dataset, thereby highlighting the superiority of our approach.}

\subsubsection{Explainability of Learnt Representations}
To demonstrate the interpretability of the representations learned by the proposed fusion strategies, we present the post-hoc analysis results on two tasks, ESOL and Lipo, as case studies. The results showcase that the learnt representations can capture task-specific patterns and offer valuable insights for molecular design.


\textbf{ESOL with Intermediate Fusion.} As presented in Table \ref{table:overall_performance_regression}, the intermediate fusion method ~\ref{sec:intermediate-fusion} exhibits superior performance on the ESOL regression task for predicting solubility. To further analyze this performance, we employed t-SNE to reduce the dimensionality of the molecule embeddings from 300 to 2, resulting in a heatmap visualized in Figure \ref{fig:ESOL-visualization}. The embeddings derived from individual modalities prior to fusion do not display a clear pattern, showing no smooth transition from low to high solubility. In contrast, the embeddings by intermediate fusion reveal a distinct and smooth transition in solubility values: molecules with similar solubility cluster together, forming a gradient that extends from the bottom left (indicating lower solubility) to the upper center (representing higher solubility). This trend underscores the effectiveness of the intermediate fusion approach in accurately capturing the quantitative structure-activity relationships for aqueous solubility.

Additionally, we examined the similarity between the respective embeddings prior to intermediate fusion and the resulting fused embedding, as depicted in Figure \ref{fig:ESOL-intermediate-fusion-similarity}. Our analysis indicates that the embeddings from each modality exhibit low similarity with the intermediate-fused representation. This observation suggests that the modalities complement each other, collectively enhancing the resulting representation of the intermediate-fused embedding.

\textbf{Lipo with Late Fusion. }As detailed in Table~\ref{table:overall_performance_regression}, the Late Fusion method (described in Section ~\ref{sec:late-fusion}) demonstrates superior performance on the Lipo regression task for predicting solubility in fats, oils, lipids, and non-polar solvents. According to Equation ~\ref{equ:late-fusion-formula}, the final prediction is determined by the respective coefficients ($w_i$) and predictions ($p_i$) from each modality.

Figure~\ref{fig:lipo-late-fusionvisualization} shows the distributions of the coefficient values, predictions, and their products for each modality. Notably, the SMILES and Image modalities display a wide range of values, highlighting their potential to significantly influence the final predictions. This observation aligns with the strong performance achieved when pretraining using either of these two modalities, as shown in Table~\ref{tab:multimodal-metrics-single-level}. In contrast, the $\text{NMR}_\text{Peak}$ values display a narrower range, indicating its role as a modifier for finer adjustments in the predictions. Furthermore, we observe that the contributions from $\text{NMR}_\text{Spectrum}$ and Fingerprint modalities are minimal, with their corresponding values approaching zero. This outcome highlights the advantages of the Late Fusion approach in effectively identifying and leveraging dominant modalities, thereby optimizing the overall predictive performance.

\begin{figure*}[ht]
    \centering
    \includegraphics[width=0.95\textwidth]{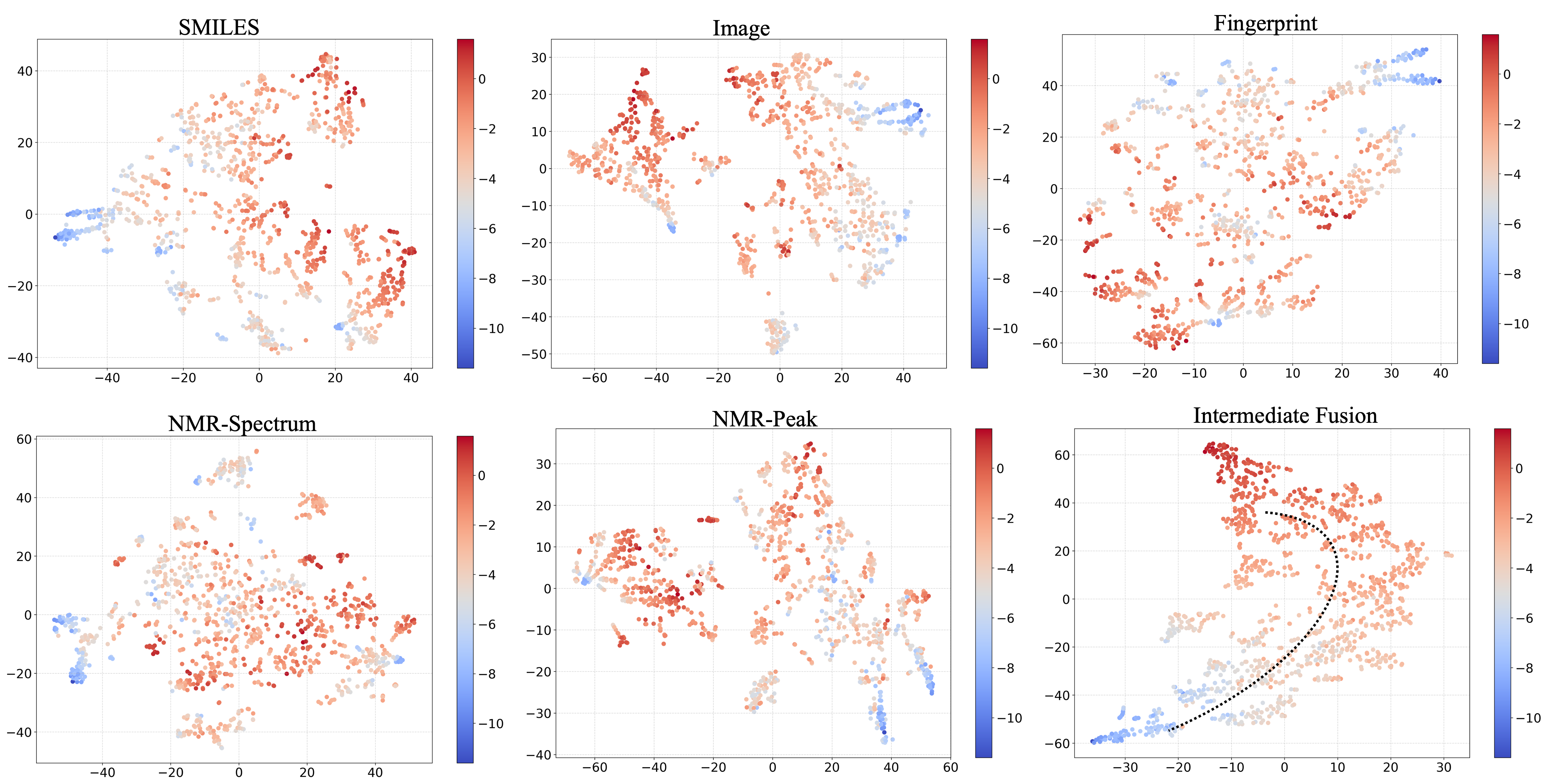}
    \caption{T-SNE visualization depicting the ESOL molecule embeddings for intermediate fusion in Section ~\ref{sec:intermediate-fusion} alongside molecules within the highlighted region. Each point in the heatmap corresponds to the embeddings of respective molecules in ESOL, with color indicating solubility levels. Red denotes higher solubility, while blue indicates lower solubility. The embeddings derived from individual modalities prior to fusion do not display a clear pattern, the embeddings by intermediate fusion forms a gradient that extends from the bottom left (indicating lower solubility) to the upper center (representing higher solubility).}
    \label{fig:ESOL-visualization}
\end{figure*}
\vspace{-5pt}
\begin{figure*}[ht]
    \centering
    \includegraphics[width=0.95\textwidth]{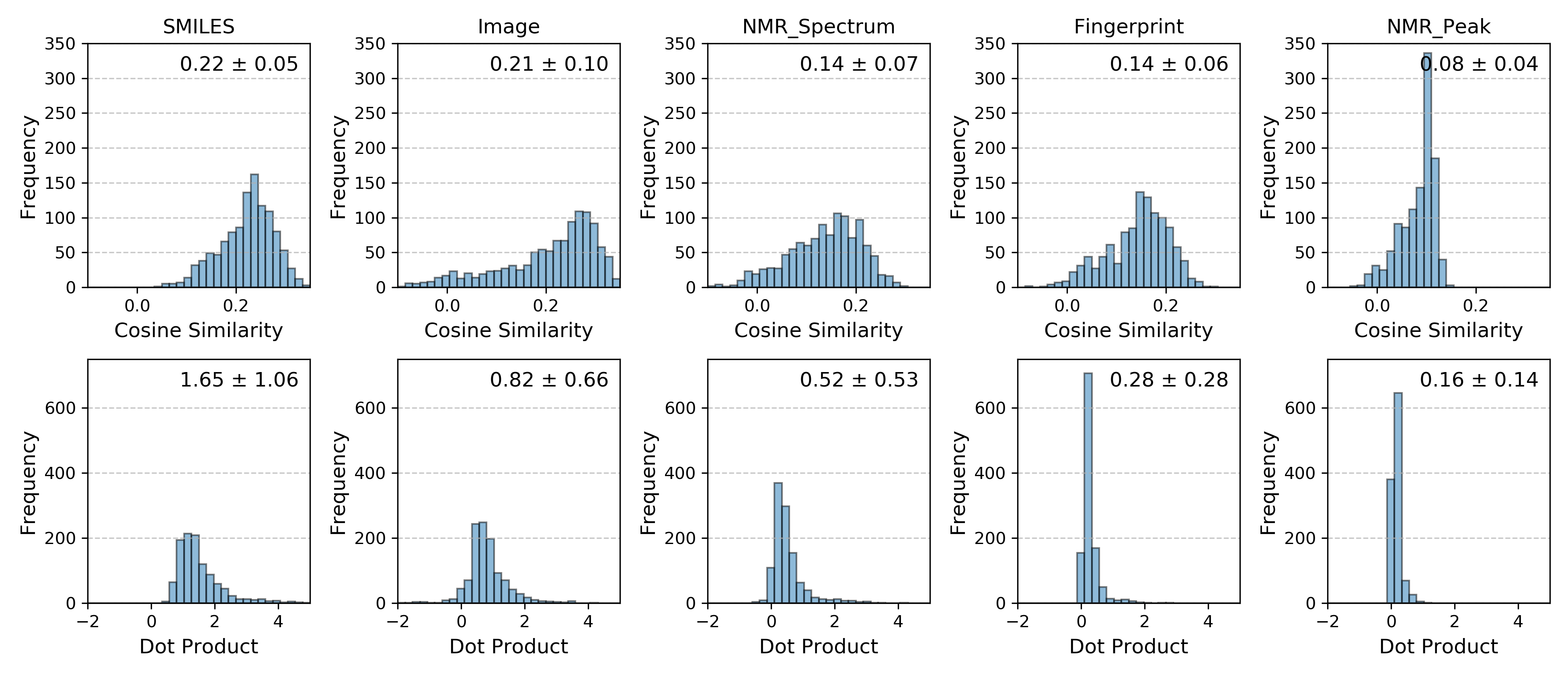}
    \caption{This figure shows the distribution of similarities between each modality and the intermediate fusion embedding for ESOL. In both Cosine Similarity and Dot Product, the embeddings from each modality exhibit low similarity with the intermediate-fused representation.}
    \label{fig:ESOL-intermediate-fusion-similarity}
\end{figure*}
\begin{figure}[t]
    \centering
    \includegraphics[width=0.95\textwidth]{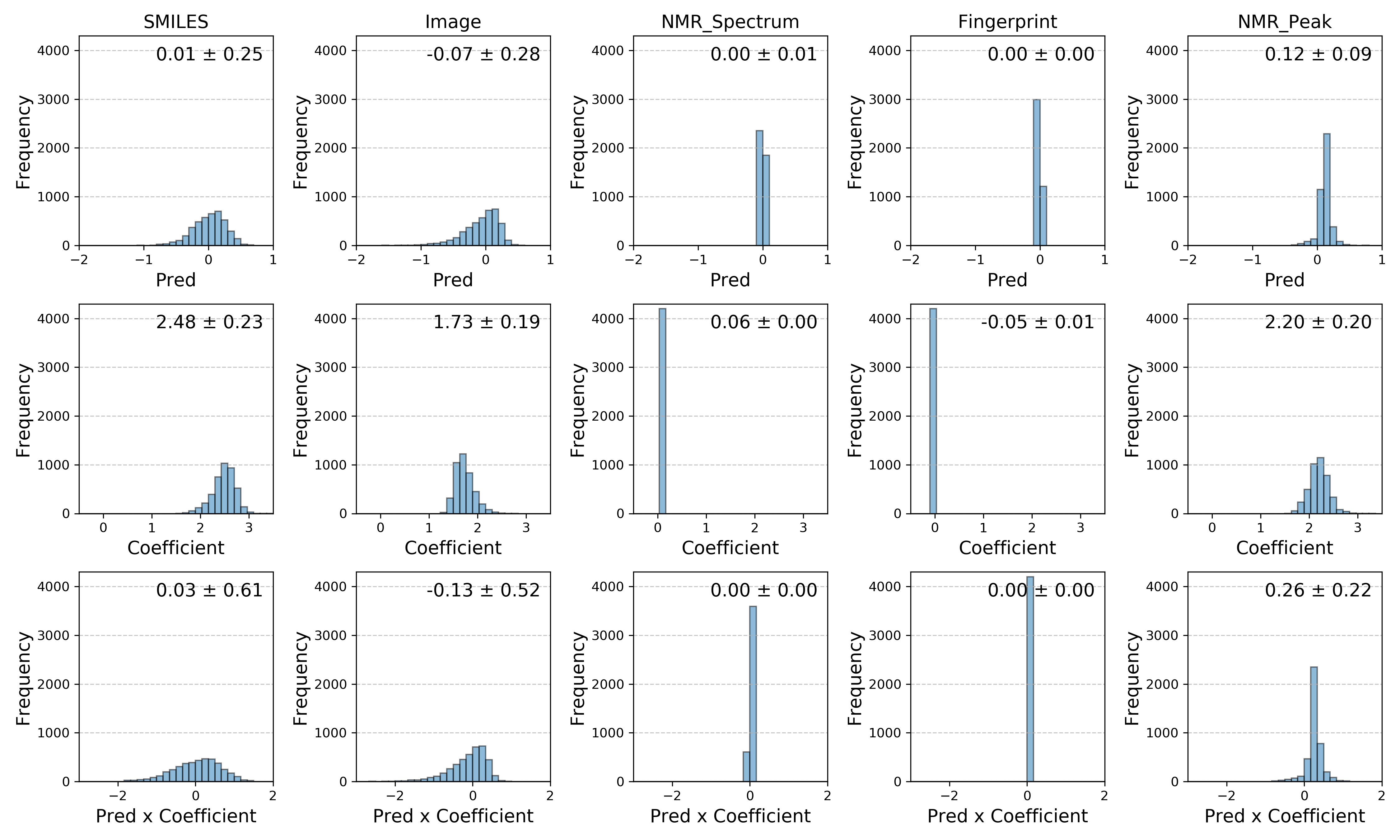}
    \caption{Lipo late fusion contribution analysis reveals that the three primary contributors are SMILES, image, and $\text{NMR}_\text{peak}$. In contrast, $\text{NMR}_\text{spectrum}$ and fingerprint exhibit negligible contributions.}
    \label{fig:lipo-late-fusionvisualization}
\end{figure}

\textcolor{black}{\textbf{\\Substructure analysis with BACE}. We explore the binding potential of positive inhibitor molecules targeting BACE and their associated key functional substructures, referred to as minimum positive subgraphs (MPS). To identify MPS, we employ a Monte Carlo Tree Search (MCTS) approach integrated into our BACE classification model, as implemented in RationalRL ~\citep{jin2020multiobjective}. MCTS, being an iterative process, allows us to evaluate each candidate substructure for its binding potential with our model. Following the determination of MPSs, we categorize the original positive BACE molecules based on their respective MPSs. By computing the binding potential difference between the original molecule and its MPS, we can identify structural features that contribute to changes in binding affinity as shown in Figure ~\ref{fig:bace-visualization}.}

\textcolor{black}{In the case of the MPS 5 group, the binding score is heavily influenced by steric effects. The top three high-performing designs (5a–5c) all feature a flexible and compact alkylated pyrazole structure (colored green), which likely facilitates better accommodation within the binding pocket. In contrast, the three lowest-performing designs (5n–5p) incorporate a more rigid and bulky (trifluoromethoxy)benzene moiety (colored red), which may introduce steric hindrance and reduce binding efficiency. Additionally, the pyrazole ring contains two nitrogen atoms, offering more potential for hydrogen bonding interactions with the target protein, whereas the (trifluoromethoxy)benzene group has only one oxygen atom, limiting its capacity for such interactions. This comparison highlights the importance of both molecular flexibility and functional group composition in optimizing binding affinity.}

\begin{figure*}[ht]
    \centering
    \includegraphics[width=0.95\textwidth]{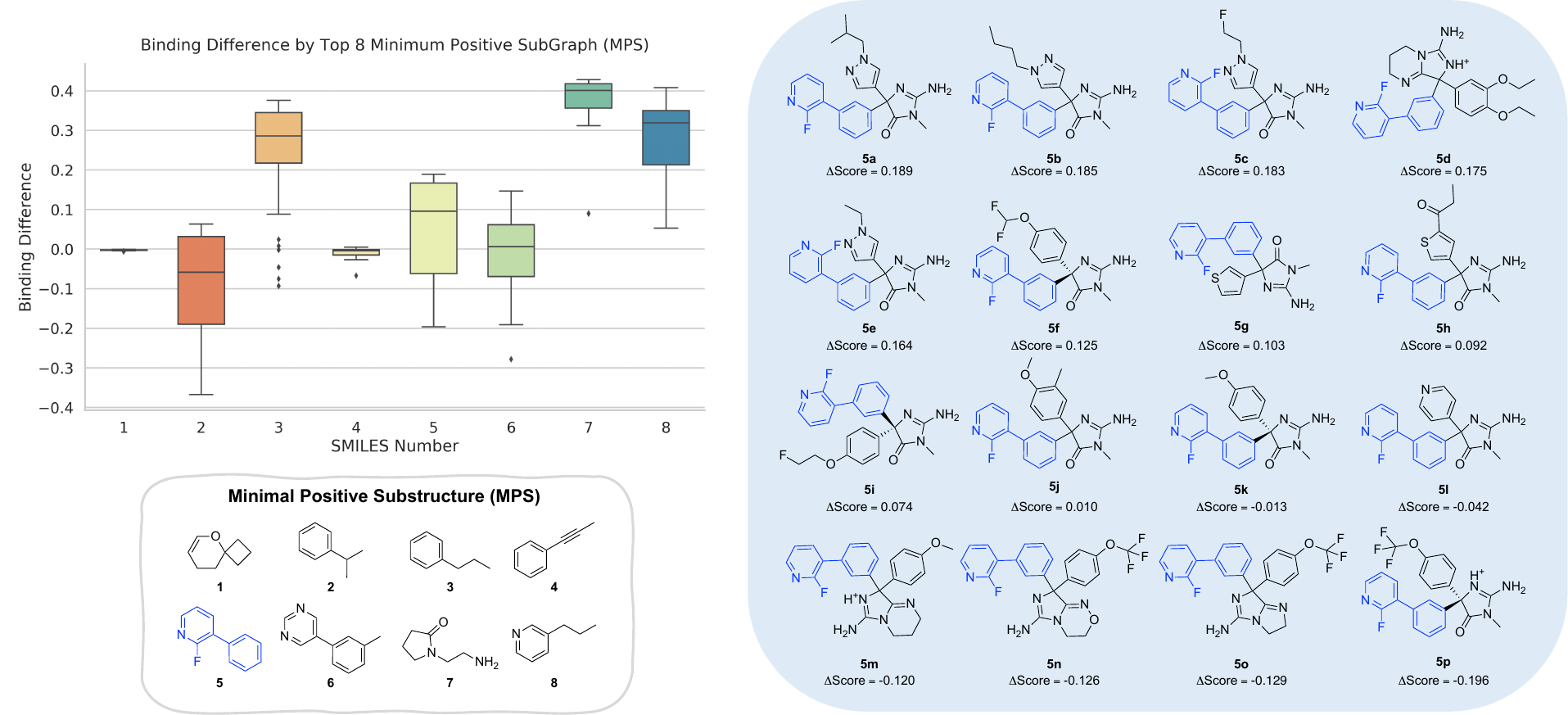}
    \caption{{The left sub-figure is the boxplot of the binding difference for the respective groups of molecules by the top 8 most frequent Minimum Positive Subgraph. The right sub-Figure showsthe detail strucutre of the 5th MPS.}}
    \label{fig:bace-visualization}
\end{figure*}

\textcolor{black}{
\subsubsection{Sensitivity Analysis}}
{Choosing the most effective fusion strategy can be empirical. However, our results presented in Table~\ref{table:overall_performance}, ~\ref{table:overall_performance_regression}, ~\ref{tab:intermediate-contrastive-loss-ablation}, \ref{tab:dude_auc_comparison}, and \ref{tab:performance_lit_pcba_transposed} provide strong evidence that our lightweight fusion strategy (early, intermediate, and late fusion) outperforms existing approaches in the literature. To guide the selection among these strategies, our intuition is as follows: if a modality is highly relevant to the downstream task, earlier fusion is likely to be more effective; otherwise, later fusion may be preferable.}

\textcolor{black}{
To test this hypothesis, we performed a retrospective analysis to assess the sensitivity of downstream tasks to different fusion strategies. Since early fusion embeddings often lack the flexibility to adapt to individual samples, we excluded them from this analysis. Instead, we used pretrained encoders to extract embeddings for each modality and performed a simple linear regression between the embeddings and task labels. We then computed the Pearson correlation between the predicted values and the ground truth as a measure of each modality's relevance.}

\textcolor{black}{
For each dataset, we recorded the highest correlation across all modalities as the "Top 1" score. We then concatenated the embeddings from all modalities and repeated the regression analysis. The improvement in correlation is reported as the "Pearson Gain." A higher Pearson Gain suggests that earlier fusion of multiple modalities is more beneficial.
}
\textcolor{black}{
As shown in Table~\ref{tab:fusion_strategies}, datasets where intermediate fusion performs best generally exhibit higher Pearson Gain compared to late fusion, supporting our intuition. However, for ESOL and FreeSolv, the correlation from a single modality is already high, making them less suitable for this analysis.
}

\begin{table}[t]
\scriptsize
\centering
{
\caption{Person correlation of different modalities and chosen fusion strategies across datasets.}
\label{tab:fusion_strategies}
\begin{tabular}{lcccccccccc}
\toprule
Dataset & Smiles & Image & NMR & FP & Peak & Top 1 & Concat & Pearson Gain & Strategy \\
\midrule
BBBP    & 0.757 & 0.765 & 0.733 & 0.746 & 0.484 & 0.765 & 0.935 & 0.170 & \textcolor{black}{Intermediate} \\
Bace    & 0.752 & 0.759 & 0.757 & 0.772 & 0.474 & 0.772 & 0.957 & 0.185 & \textcolor{black}{Intermediate} \\
Sider   & 0.598 & 0.579 & 0.578 & 0.592 & 0.403 & 0.597 & 0.973 & 0.376 & \textcolor{black}{Intermediate} \\
Hiv     & 0.305 & 0.295 & 0.239 & 0.287 & 0.193 & 0.305 & 0.420 & 0.115 & Late \\
MUV     & 0.159 & 0.159 & 0.141 & 0.147 & 0.050 & 0.159 & 0.315 & 0.156 & \textcolor{black}{Intermediate} \\
Clintox & 0.577 & 0.604 & 0.546 & 0.577 & 0.405 & 0.639 & 0.920 & 0.281 & \textcolor{black}{Intermediate} \\
tox21   & 0.550 & 0.558 & 0.578 & 0.565 & 0.183 & 0.578 & 0.691 & 0.113 & \textcolor{black}{Intermediate} \\
toxcast & 0.587 & 0.590 & 0.523 & 0.577 & 0.333 & 0.590 & 0.908 & 0.318 & \textcolor{black}{Intermediate} \\
Lipo    & 0.782 & 0.795 & 0.623 & 0.780 & 0.542 & 0.795 & 0.920 & 0.125 & Late \\
ESOL    & 0.958 & 0.960 & 0.893 & 0.947 & 0.705 & 0.960 & 0.999 & 0.039 & \textcolor{black}{Intermediate} \\
FreeSolv & 0.982 & 0.980 & 0.915 & 0.977 & 0.768 & 0.982 & 0.999 & 0.017 & \textcolor{black}{Intermediate} \\
\bottomrule
\end{tabular}
}
\end{table}


\section{Conclusion}

In summary, we introduce a novel relational learning metric for molecular graph representation that enhances the understanding of inter-instance relationships by capturing both local and global contexts. This is the first implementation of such a generalized metric in molecular graphs.Our method transforms pairwise self-similarity into relative similarity through a weighting function, allowing for complex relational insights. This metric is integrated into a multimodal representation, improving performance by utilizing modalities not directly accessible during fine-tuning. Empirical results show that our approach, MMFRL, excels in various molecular property prediction tasks. We also demonstrate detailed study about the explainability of the learned representations, offering valuable insights for drug molecule design. Despite these accomplishments, further exploration is needed to achieve more effective integration of graph- and node-level similarities. Looking ahead, we are enthusiastic about the prospect of applying our model to additional fields, such as social science, thereby broadening its applicability and impact.

\section{Dataset}
\subsection{Selected Modalities for Target Similarity Calculation}
The following modalities are used for target similarity calculation. For details on training the corresponding encoders to obtain fixed embeddings for these modalities, please refer to Appendix Section ~\ref{app:prefix-encoders}.

\textbf{Fingerprint:} Fingerprints are binary vectors that represent molecular structures, capturing the presence or absence of particular substructures, fragments, or chemical features within a molecule. \textcolor{black}{In particular, we utilize Morgan fingerprints, which are based on the Extended-Connectivity Fingerprints (ECFP) method introduced by Rogers and Hahn ~\citep{rogers2010ecfp}. Specifically, we generate fingerprints using AllChem.GetMorganFingerprintAsBitVect(mol, 2), which corresponds to ECFP4 (radius = 2). Because ECFP4 is one the most effective and Interpretable Molecular Representations ~\citep{zhong2023countmorgan}}.

\textbf{SMILES (Simplified Molecular Input Line Entry System):} SMILES offers a compact textual representation of chemical structures.

\textbf{NMR (Nuclear Magnetic Resonance):} NMR spectroscopy provides detailed insights into the chemical environment of atoms within a molecule ~\citep{bunzel2006nmr}. By analyzing the interactions of atomic nuclei with an applied magnetic field, NMR can reveal information about the structure, dynamics, and interactions of molecules, including the connectivity of atoms, functional groups, and conformational changes. In our experiments, $\text{NMR}_\text{spectrum}$ provides the information about the overal information of molecule while $\text{NMR}_\text{peak}$ provides the information about the individual atoms in the molecule.

\textbf{Image:} Images (e.g., 2D chemical structures) provide a visual representation of molecular structures. 

All of the similarity calculation from the modalities above are listed in Appendix ~\ref{appendix:uni-modal-self}.

\subsection{Pre-training}
NMRShiftDB-2 ~\citep{landrum2006rdkit} is a comprehensive database dedicated to nuclear magnetic resonance (NMR) chemical shift data, providing researchers with an extensive collection of expert-annotated NMR data for various organic compounds with molecular structures (SMILES). There are around 25,000 molecules used for pre-training and no overlap with downstream task datasets. And molecular images and graphs are generated via RDkit ~\citep{RDKit}.

\subsection{Downstream tasks}
Our model was trained on 11 drug discovery-related benchmarks sourced from MoleculeNet ~\citep{wu2018moleculenet}. Eight of them were classification tasks, including BBBP, BACE, SIDER, CLINTOX, HIV, MUV, TOX21, and ToxCas. The other three are regression tasks, including ESOL, Freesolv, and Lipo. Each dataset was divided into the train, validation, and test subsets in an 80\%:10\%:10\% ratio using the scaffold splitter ~\citep{halgren1996merck, landrum2006rdkit} from Chemprop ~\citep{yang2019analyzing, heid2023chemprop}. The scaffold splitter categorizes molecular data based on substructures, ensuring diverse structures in each subset. Molecules are partitioned into bins, with those exceeding half of the test set size assigned to training, promoting scaffold diversity in validation and test sets. Remaining bins are randomly allocated until reaching the desired set sizes, creating multiple scaffold splits for comprehensive evaluation. 

The DUD-E (Directory of Useful Decoys: Enhanced) dataset ~\citep{mysinger2012} is a widely used benchmark for virtual screening, containing 102 protein targets, thousands of active compounds, and carefully selected decoys that resemble actives in physico-chemical properties but differ topologically. In contrast, LIT-PCBA (Low-Throughput Informatics-Targeted PubChem BioAssay) ~\citep{trannguyen2020} offers a more realistic and challenging benchmark, derived from real experimental assays across 15 targets, with no artificial decoys and inherent data noise and imbalance. Together, they represent two ends of the spectrum in virtual screening evaluation—DUD-E with idealized conditions, and LIT-PCBA with real-world complexity. For the fine-tuning setting, We follow the same split and test approach as ~\citep{gao2023drugclip} for DUD-E and ~\citep{cai2022fpgnn} for LIT-PCBA.

\section{Methods}
We first explain the preliminaries, and then our proposed modified metric in relational learning to facilitate smooth alignment between graph and referred unimodality. Then, we introduce approaches for integrating multi modalities at different stages of the learning process.

\subsection{Molecular representation with DMPNN}
\label{Preliminary:all}
The Message Passing Neural Network (MPNN) ~\citep{gilmer2017neural} is a GNN model that processes an undirected graph $G$ with node (atom) features $x_v$ and edge (chemical bond) features $e_{vw}$. It operates through two distinct phases: a message passing phase, facilitating information transmission across the molecule to construct a neural representation, and a readout phase, utilizing the final representation to make predictions regarding properties of interest. The primary distinction between DMPNN and a generic MPNN lies in the message passing phase. While MPNN uses messages associated with nodes, DMPNN crucially differs by employing messages associated with directed edges ~\citep{yang2019analyzing}. This design choice is motivated by the necessity to prevent totters ~\citep{mahe2004extensions}, eliminating messages passed along paths of the form $v_1 v_2 \dots v_n$, where $v_i = v_{i+2}$ for some $i$, thereby eliminating unnecessary loops in the message passing trajectory.  

\subsection{Modified Relational Learning in pretraining}
Original Relation Learning ~\citep{zheng2021ressl} ensures that different augmented views of the same instance from computer vision tasks share similar features, while allowing for some variability. Suppose $z_i$ is the original embdding for the $i-th$ instance. Then $z_i^1$ is the embedding of first augmented view for $z_i$, and $z_i^2$ is the embedding of second augmented view for $z_i$. In this case, the Loss of Relational Learning (RL) is formulated as following:
\begin{equation*}
s_{ik}^1 = \frac{\mathbbm{1}_{i \neq k} \cdot \exp(z_i^1 \cdot z_k^2 / \tau)}{\sum_{j=1}^{N} \mathbbm{1}_{i \neq j} \cdot \exp(z_i^1 \cdot z_j^2 / \tau)}
\end{equation*}
\begin{equation*}
s_{ik}^2 = \frac{\mathbbm{1}_{i \neq k} \cdot \exp(z_i^2 \cdot z_k^2 / \tau_m)}{\sum_{j=1}^{N} \mathbbm{1}_{i \neq j} \cdot \exp(z_i^2 \cdot z_j^2 / \tau_m)}
\end{equation*}
\begin{equation*}
L_{RL} = -\frac{1}{N} \sum_{i=1}^{N} \sum_{\substack{k=1 \\ k \neq i}}^{N} s_{ik}^2 \log(s_{ik}^1).
\end{equation*}

We propose a modified relational metric by adapting the softmax function as a pairwise weighting mechanism. Let $|\mathcal{S}|$ denote the size of the instance set. The variable ${s}_{i,j}$ represents the learned similarity where $z_i$ is the embedding to be trained. On the other hand, ${t}_{i,j}^{R}$ defines the target similarity that captures the relationship between the pair of instances in the given space or modality $R$, where $z_i^{R}$ is a fixed embedding. The detailed formulation for the Loss of Modified Relatioal Learning (MRL) is provided below:
\begin{equation}
s_{i,j} = \frac{\exp(sim(z_i,z_j))}{\sum_{k=1}^{|\mathcal{S}|} \exp(sim(z_i,z_k))}
\end{equation}
\begin{equation}
t_{i,j}^R = \frac{\exp(sim(z_i^{R},z_j^{R}))}{\sum_{j=1}^{|\mathcal{S}|} \exp(sim(z_i^R,z_k^R))}
\end{equation}
\begin{equation}
\label{equ:lgscl}
L_{MRL} = -\frac{1}{|\mathcal{S}|} \sum_{i=1}^{|\mathcal{S}|} \sum_{\substack{j=1}}^{|\mathcal{S}|} t_{i,j}^R \log(s_{i,j}).
\end{equation}
Notably, unlike other similarity learning approaches ~\citep{wang2019multi, zhang2021jointly}, our method does not rely on the categorization of negative and positive pairs for the pair weighting function. Additionally, our use of the softmax function ensures that the generalized target similarity \( t_{i,j} \) adheres to the principles of convergence, which results in better ranking consistency between the graph modality and the auxiliary modality, compared with the original Relational Study, as follows:

\begin{theorem}[Convergence of Modified Relational Learning Metric]
\label{thm:convergent-smi-learning}
Let $\mathcal{S}$ be a set of instances with size of $|\mathcal{S}|$, and let $\mathcal{P}$ represent the learnable latent representations of instances in $\mathcal{S}$ such that $|\mathcal{P}| = |\mathcal{S}|$. For any two instances $i, j \in \mathcal{S}$, their respective latent representations are denoted by $\mathcal{P}_{i}$ and $\mathcal{P}_{j}$. Let $t_{i,j}$ represent the target similarity between instances $i$ and $j$ in a given domain, and let  $d_{i,j}$ be the similarity between $\mathcal{P}_i$ and $\mathcal{P}_j$ in the latent space. If $t_{i,j}$ is non-negative and $\{t_{i,j}\}$ satisfies the constraint $\sum_{j=1}^{|\mathcal{S}|}t_{i,j} = 1$, consider the loss function for an instance $i$ defined as follows:
\begin{equation}
    L(i) = -\sum_{j=1}^{|\mathcal{S}|} t_{i,j} \log \left( \frac{e^{d_{i,j}}}{\sum_{k=1}^{|\mathcal{S}|} e^{d_{i,k}}} \right)
\end{equation}
then when it reaches ideal optimum, the relationship between $t_{i,j}$ and $d_{i,j}$ satisfies:
\begin{equation}
    \text{softmax}(d_{i,j}) = t_{i,j}
\end{equation}
\end{theorem}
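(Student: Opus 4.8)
The plan is to treat this as a constrained optimization problem: for a fixed instance $i$, the loss $L(i)$ is a function of the vector $\mathbf{d}_i = (d_{i,1}, \dots, d_{i,|\mathcal{S}|})$, and we want to characterize its stationary points (or, more precisely, show that the stated relationship holds at the optimum). The first step is to expand the logarithm of the softmax: $\log\!\left( e^{d_{i,j}} / \sum_k e^{d_{i,k}} \right) = d_{i,j} - \log \sum_k e^{d_{i,k}}$, so that $L(i) = -\sum_j t_{i,j} d_{i,j} + \big(\sum_j t_{i,j}\big) \log \sum_k e^{d_{i,k}} = -\sum_j t_{i,j} d_{i,j} + \log \sum_k e^{d_{i,k}}$, using the constraint $\sum_j t_{i,j} = 1$. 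This rewriting already makes the structure transparent — it is the cross-entropy between the target distribution $t_{i,\cdot}$ and the softmax distribution.

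The second step is to recognize $L(i)$ as (up to an additive constant independent of $\mathbf{d}_i$) the Kullback--Leibler divergence $\mathrm{KL}\big(t_{i,\cdot} \,\|\, \mathrm{softmax}(\mathbf{d}_i)\big)$. Concretely, $\mathrm{KL}(t \| q) = \sum_j t_{i,j}\log t_{i,j} - \sum_j t_{i,j}\log q_j$ where $q_j = e^{d_{i,j}}/\sum_k e^{d_{i,k}}$, and the first sum is a constant. By Gibbs' inequality (nonnegativity of KL divergence), $L(i)$ is minimized exactly when $\mathrm{softmax}(\mathbf{d}_i) = t_{i,\cdot}$ as probability distributions, i.e. $\mathrm{softmax}(d_{i,j}) = t_{i,j}$ for all $j$ — which is the claimed identity. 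One should note such a $\mathbf{d}_i$ always exists because $t_{i,j}$ is non-negative and sums to one (taking $d_{i,j} = \log t_{i,j}$ works when all $t_{i,j} > 0$; the boundary case where some $t_{i,j} = 0$ is approached in the limit, which is why the statement phrases it as ``ideal optimum''). Alternatively, one can present the same conclusion via first-order conditions: compute $\partial L(i)/\partial d_{i,j} = -t_{i,j} + \mathrm{softmax}(d_{i,j})$, set it to zero, and observe this directly yields $\mathrm{softmax}(d_{i,j}) = t_{i,j}$; convexity of $L(i)$ in $\mathbf{d}_i$ (it is a sum of a linear term and the log-sum-exp function, which is convex) then guarantees this stationary point is the global minimum.

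The main obstacle is not the calculus but the degenerate/boundary behavior: the map $\mathbf{d}_i \mapsto \mathrm{softmax}(\mathbf{d}_i)$ is invariant under adding a constant to all coordinates, so the optimal $\mathbf{d}_i$ is only determined up to this shift, and — more seriously — when some target $t_{i,j} = 0$ there is no finite $\mathbf{d}_i$ achieving the exact optimum (it requires $d_{i,j} \to -\infty$). I would handle this by either (i) restricting to the interior case $t_{i,j} > 0$ and stating the result there, with a remark that the general case holds as an infimum/limit, or (ii) phrasing the theorem, as the authors do, in terms of the relationship that holds ``when it reaches ideal optimum,'' and simply verifying that $\mathrm{softmax}(d_{i,j}) = t_{i,j}$ is consistent with (and forced by) the stationarity condition. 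Either way the core argument — rewrite as cross-entropy, invoke Gibbs' inequality or set the gradient to zero — is short; the write-up mainly needs a clean statement of which convexity/nonnegativity fact is being used and a sentence addressing uniqueness up to the additive shift.
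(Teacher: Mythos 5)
Your proposal is correct, and your ``alternative'' presentation via first-order conditions is essentially the paper's own proof: the appendix computes $\partial L(i)/\partial d_{i,j}$ by splitting the sum into the $j$-th term and the rest, arrives at $-(t_{i,j}-\mathrm{softmax}(d_{i,j}))$, sets it to zero, and checks the sign of the diagonal second partials. Your primary route --- rewriting $L(i)=-\sum_j t_{i,j}d_{i,j}+\log\sum_k e^{d_{i,k}}$ and recognizing it as cross-entropy, hence KL divergence up to a constant, then invoking Gibbs' inequality --- is genuinely different and buys you two things the paper's argument does not cleanly deliver. First, global optimality: the paper only verifies that each $\partial^2 L(i)/\partial d_{i,j}^2>0$, which does not by itself establish that the Hessian is positive semidefinite or that the stationary point is a global minimum; your observation that $L(i)$ is a linear function plus log-sum-exp (a convex function) closes that gap correctly, and Gibbs' inequality sidesteps the Hessian entirely. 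Second, you explicitly address existence and degeneracy --- the shift-invariance of the softmax (so $\mathbf{d}_i$ is determined only up to an additive constant) and the boundary case $t_{i,j}=0$ where the optimum is attained only in the limit --- neither of which the paper discusses. The one item the paper includes that you omit is the concluding ranking-consistency remark ($t_{i,j}>t_{i,j'}$ implies $d_{i,j}>d_{i,j'}$ at the optimum), but that follows in one line from $d_{i,j}-d_{i,j'}=\log(t_{i,j}/t_{i,j'})$ and is not part of the theorem statement itself.
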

For detailed proof, please refer to Appendix Section ~\ref{appendix:gml-guide-proof}. 


\subsection{Fusion of multi-modality information in downstream tasks.}
During pre-training, the encoders are initialized with parameters derived from distinct reference modalities. A critical question that arises is how to effectively utilize these pre-trained models during the fine-tuning stage to improve performance on downstream tasks.

\subsubsection{Early Stage: Multimodal Multi-Similarity}
With a set of known target similarity $\{t^{R}\}$ from various modalities, we can transform themto multimodal space through a fusion function. There are numerous potential designs of the fusion function. For simplicity, we take linear combination as a demonstration. The multimodal generalized multi-similarity $t_{i,j}^{M}$ between $i^{th}$ and $j^{th}$ objects can be defined as follows: 
\begin{align}
  t_{i,j}^{M} &= fusion(\{t^{R}\})\\
  &= \sum w_{R} \cdot t_{i,j}^{R}
\label{equ:graph-guide-zhou-general}
\end{align}
where $t_{i,j}^{R}$ represents the target similarity between $i^{th}$ and $j^{th}$ instance in unimodal space $R$, $w_{R}$ is the pre-defined weights for the corresponding modal, and $\sum w_{R} = 1$. Then we can make $t_{i,j} = t_{i,j}^{R}$ in equation ~\ref{equ:lgscl}. Such that, it still satisfy the requirement of convergence (See proof in Appendix Section\ref{appendix:guarantee-fusion-sum}). In this case, the learnt similarity during pretraining will be aligned with this new combined target similarity.

\subsubsection{Intermediate Stage: Embedding concatenation and fusion}
\label{sec:intermediate-fusion}
Intermediate fusion integrates features from various modalities after their individual encoding processes and prior to the decoding/readout stage. Let \( \mathbf{f}_1, \mathbf{f}_2, \ldots, \mathbf{f}_n \) represent the feature vectors obtained from these different modalities. The resulting fused feature vector can be defined as follows:

\begin{equation}
\mathbf{f}_{\text{fused}} = \text{MLP}(\text{concat}(\mathbf{f}_1, \mathbf{f}_2, \ldots, \mathbf{f}_n))
\end{equation}

Where concat represents concatenation of the feature vectors. The fused features are then fed into a later readout function or decoder for downstrean tasks prediction or classification. The MLP (Multi-Layer Perceptron) is used to reduce the dimension to be the same as $\mathbf{f}_i$.

\subsubsection{Late Stage: decision-level}
\label{sec:late-fusion}
Late fusion (or decision-level fusion) combines the outputs of models trained on different modalities after they have been processed independently. Each modality is first processed separately, and their predictions are combined at a later stage.

Let \( p_1, p_2, \ldots, p_n \) be the predictions (e.g., probabilities) from different modalities. The final prediction \( p_{\text{final}} \) can be computed using a weighted sum mechanism:

\begin{equation}
w_i = T_i(\text{f}_i)
\end{equation}
\begin{equation}
p_i = \text{readout}_i(\text{f}_i)
\end{equation}
\begin{equation}
\label{equ:late-fusion-formula}
p_{\text{final}} = \sum_{i=1}^{n} w_i p_i
\end{equation}

Where \( w_i \) are the weights assigned to each modality's prediction, and they can be adjusted based on the importance of each modality. In particular, $w_i$ is tunable during the learning process for respective downsteak tasks.

\textcolor{black}{
\section*{Data Availability}
The pretraining data can be downloaded from \href{https://nmrshiftdb.nmr.uni-koeln.de/}{NMRShiftDB2}. The MoleculeNet dataset is available at \href{https://moleculenet.org/}{MoleculeNet}. The DuD-E dataset can be accessed at \href{http://dude.docking.org/}{DuD-E}, and the Lit-PCBA dataset can be downloaded from \href{https://drugdesign.unistra.fr/LIT-PCBA/}{Lit-PCBA}.
}
\textcolor{black}{
\section*{Code Availability}
The code is available in Github: \href{https://github.com/zhengyjo/MMRFL}{https://github.com/zhengyjo/MMRFL}
}



\bibliography{example_paper_clean_revised}
\bibliographystyle{iclr2025_conference}

\newpage
\appendix
\onecolumn
\begin{center}
    \Large{Appendix}
\end{center}
\counterwithin{figure}{section}
\counterwithin{equation}{section}
\counterwithin{table}{section}
\setcounter{figure}{0} 
\setcounter{equation}{0} 
\setcounter{table}{0} 
\section {Multi-Similarity \& Contrastive Learning}
\subsection{Multi-Similarities in Contrastive Learning}
Two distinct types of similarities, as illustrated in Appendix Figure \ref{fig:similarities}, can be identified: \textit{self-similarity} (the pairwise similarity between two objects, typically defined through cosine similarity) and \textit{relative similarity} (distinctions in self-similarity with other pairs) ~\citep{wang2019multi}.
\begin{figure}[H] 
    \centering
    \includegraphics[width=0.48\textwidth]{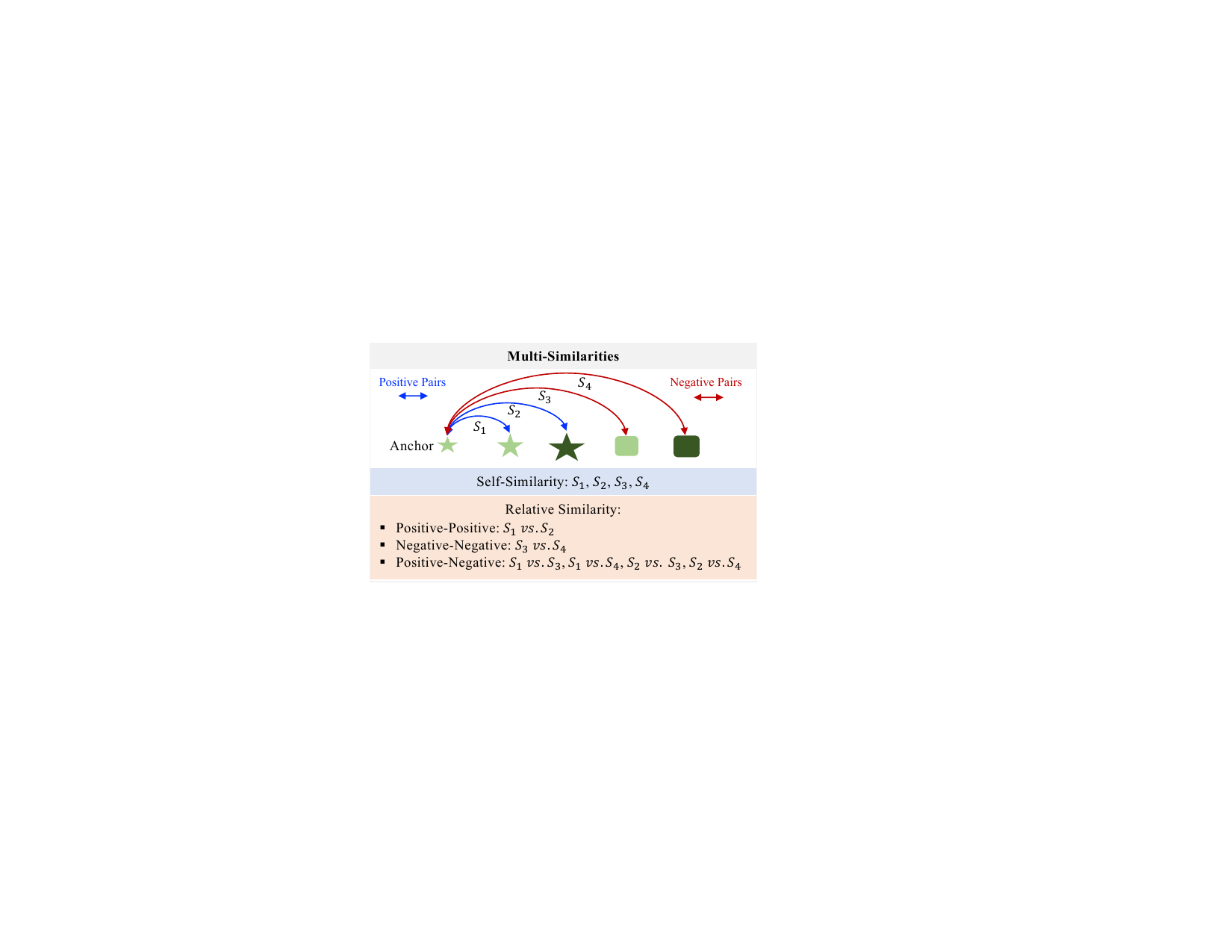}
    \caption{Illustration of Different Types of Similarities.}
    \label{fig:similarities}
\end{figure}

\subsection{Current Molecular Graph Contrastive Learning Approaches}
In current molecular graph contrastive learning approaches, positive pairs are commonly formed through either \textit{data augmentation} ~\citep{sun2021mocl, you2020graph}, employing techniques such as node deletion, edge perturbation, subgraph extraction, attribute masking, and subgraph substitution, or \textit{domain knowledge}, as demonstrated by reactant-product pairing ~\citep{wang2022chemicalreactionaware} or conformer grouping ~\citep{moon20233d}.

\begin{figure}[H] 
    \centering
    \includegraphics[width=0.9\textwidth]{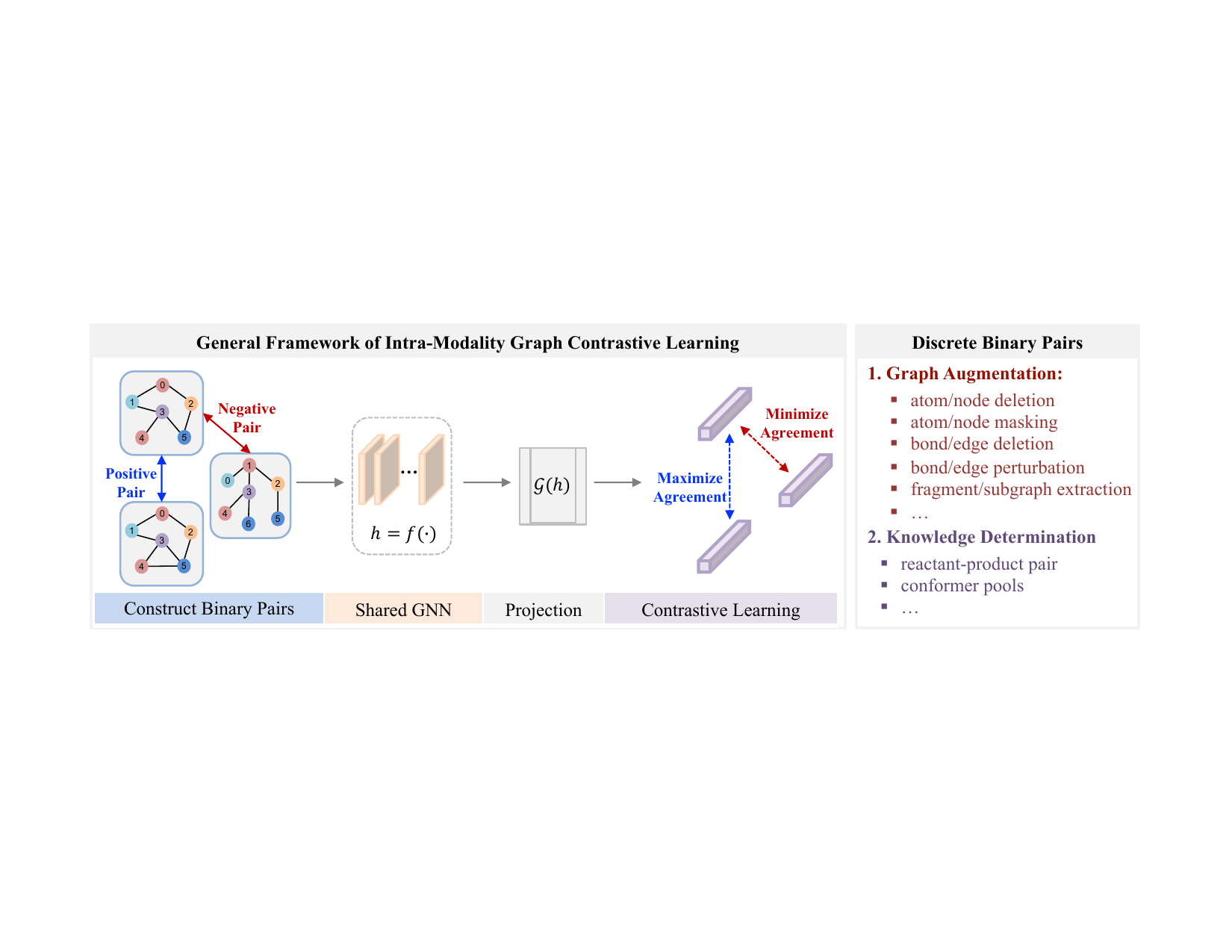}
    \caption{General framework of Intra-Modality Graph Contrastive Learning. It relies on definition of positive and negative pairs.}
    \label{fig:traditional-cl}
\end{figure}

\section{Supplementary Proof}
\subsection{Revisiting Theorem of Convergent Similarity Learning}
\label{appendix:gml-guide-proof}
Let $\mathcal{S}$ be a set of instances with size $|\mathcal{S}|$, and let $\mathcal{P}$ represent the tunable latent representations of instances in $\mathcal{S}$ such that $|\mathcal{P}| = |\mathcal{S}|$. For any two instances $i, j \in \mathcal{S}$, their latent representations are denoted by $\mathcal{P}_{i}$ and $\mathcal{P}_{j}$, respectively. Let $t_{i,j}$ represent the target similarity between instances $i$ and $j$ in a given domain, and $d_{i,j}$ be the similarity between $\mathcal{P}_i$ and $\mathcal{P}_j$ in the latent space.
\begin{theorem}[Theorem of Convergent Similarity learning]
Given $t_{i,j}$ is non-negative and $\{t_{i,j}\}$ satisfies the constraint $\sum_{j=1}^{|\mathcal{S}|}t_{i,j} = 1$, consider the loss function for an instance $i$ defined as follows:
\begin{equation}
    L(i) = -\sum_{j=1}^{|\mathcal{S}|} t_{i,j} \log \left( \frac{e^{d_{i,j}}}{\sum_{k=1}^{|\mathcal{S}|} e^{d_{i,k}}} \right)
\end{equation}
then when it reaches ideal optimum, the relationship between $t_{i,j}$ and $d_{i,j}$ satisfies:
\begin{equation}
    \text{softmax}(d_{i,j}) = t_{i,j}
\end{equation}
\end{theorem}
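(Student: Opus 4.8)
The plan is to recognize $L(i)$ as a cross-entropy between the fixed target distribution $\{t_{i,j}\}_{j}$ and the softmax-induced distribution $p_{i,j} := e^{d_{i,j}}/\sum_{k} e^{d_{i,k}}$, and then invoke the standard fact that cross-entropy, viewed as a function over the probability simplex, is minimized exactly when the two distributions coincide. First I would observe that for any choice of the latent representations $\mathcal{P}$, the vector $\{p_{i,j}\}_{j}$ is a genuine probability distribution (positive entries summing to one), so that $L(i) = -\sum_{j} t_{i,j}\log p_{i,j}$. Using the hypothesis $\sum_{j} t_{i,j} = 1$, I would then rewrite $L(i) = H(t_i) + D_{\mathrm{KL}}(t_i \,\|\, p_i)$, where $H(t_i) = -\sum_{j} t_{i,j}\log t_{i,j}$ is independent of the $d_{i,j}$, and $D_{\mathrm{KL}}(t_i\|p_i) = \sum_{j} t_{i,j}\log(t_{i,j}/p_{i,j})$. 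Gibbs' inequality (equivalently, Jensen applied to the convex function $-\log$) gives $D_{\mathrm{KL}}(t_i\|p_i)\ge 0$ with equality iff $p_{i,j}=t_{i,j}$ for all $j$, so the optimum of $L(i)$ occurs precisely when $\mathrm{softmax}(d_{i,j}) = t_{i,j}$.

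Alternatively, and more directly, I would argue via first-order stationarity, which is likely the cleanest route for the paper. Differentiating, $\partial L(i)/\partial d_{i,m} = -\sum_{j} t_{i,j}\bigl(\delta_{jm} - p_{i,m}\bigr) = p_{i,m} - t_{i,m}$, where the last equality again uses $\sum_{j} t_{i,j} = 1$. Setting every partial derivative to zero yields $p_{i,m} = t_{i,m}$ for all $m$, i.e. $\mathrm{softmax}(d_{i,m}) = t_{i,m}$. To promote this critical point to a global optimum, I would note that $L(i)$ expands to an affine function of the vector $(d_{i,1},\dots,d_{i,|\mathcal{S}|})$ plus the convex log-sum-exp term $\log\sum_{k} e^{d_{i,k}}$, hence is convex, so any stationary point is a global minimizer.

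The argument is essentially routine; the only points needing care are (i) the minimizer is not unique, since the softmax is invariant under adding a common constant to all $d_{i,j}$, so there is a one-parameter family of optima, but every member satisfies the claimed identity, which is all the statement asserts; and (ii) if some $t_{i,j}$ vanishes, the exact equality is only approached as $d_{i,j}\to-\infty$ rather than attained, so ``reaches ideal optimum'' should be understood as referring to the infimum of $L(i)$ — or one simply restricts to strictly positive $t_{i,j}$, which is automatic here because each target similarity is itself produced by a softmax. I expect phrasing this attainment/limit subtlety cleanly to be the only mildly delicate part of the write-up.
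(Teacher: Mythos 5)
Your proposal is correct, and your second argument --- first-order stationarity giving $\partial L(i)/\partial d_{i,m} = \mathrm{softmax}(d_{i,m}) - t_{i,m}$, hence $\mathrm{softmax}(d_{i,j}) = t_{i,j}$ at any critical point --- is exactly the route the paper takes. The differences are worth noting. First, your justification that the critical point is a global minimum (affine term plus convex log-sum-exp, hence $L(i)$ is jointly convex in $(d_{i,1},\dots,d_{i,|\mathcal{S}|})$) is actually \emph{stronger} than the paper's: the paper only checks that each diagonal second partial $\mathrm{softmax}(d_{i,j})(1-\mathrm{softmax}(d_{i,j}))$ is positive, which by itself does not establish positive semidefiniteness of the Hessian or global optimality; your convexity argument closes that gap. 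Second, your primary route --- writing $L(i) = H(t_i) + D_{\mathrm{KL}}(t_i\|p_i)$ and invoking Gibbs' inequality --- is a genuinely different decomposition that the paper does not use; it buys a calculus-free proof of global optimality and makes the equality condition immediate, at the cost of not directly exhibiting the gradient (which the paper presumably wants anyway, since the loss is optimized by gradient descent). Finally, your two caveats (the one-parameter family of optima under a common additive shift, and the attainment-versus-infimum issue when some $t_{i,j}=0$) are real and are silently ignored by the paper; the paper instead appends an ordering-consistency observation ($t_{i,j} > t_{i,j'}$ implies $d_{i,j} > d_{i,j'}$ at the optimum) that you omit but which follows in one line from the identity you proved.
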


\begin{proof}
In order to optimize the loss $L(i)$, we need to set the following partial derivative to be 0 for each $d_{i,j}$ with $1\leq j \leq |\mathcal{M}|$. Here are the detailed steps:
\begin{align*}
\frac{\partial L(i)}{\partial {d_{i,j}}} 
&= \frac{\partial}{\partial d_{i,j}}\underbrace{\left( - t_{i,j} \log \frac{e^{d_{i,j}}}{e^{d_{i,j}} + \sum_{\substack{k \neq j}} e^{d_{i,k}}} \right)}_{\text{When the numerator includes } e^{d_{i,j}}} + \frac{\partial}{\partial d_{i,j}}\underbrace{\left( \sum_{\substack{k \neq j}} - t_{i,k} \log \frac{e^{d_{i,k}}}{e^{d_{i,j}} + \sum_{\substack{k \neq j}} e^{d_{i,k}}} \right)}_{\text{When the numerator does not include }e^{d_{i,j}}} \\
&= -(t_{i,j} - t_{i,j} \cdot \text{softmax}(d_{i,j})) - \sum_{\substack{k \neq j}} t_{i,k} \cdot \text{softmax}(d_{i,j}) \\
&= - \left( t_{i,j} - \left(t_{i,j} + \sum_{\substack{k \neq j}} t_{i,k} \right) \cdot \text{softmax}(d_{i,j})\right)
\end{align*}
Since $\sum_{l=1}^{|\mathcal{M}|}t_{i,l} = 1$, we can further simplify it as 
\begin{align*}
\frac{\partial L(i)}{\partial {d_{i,j}}} = - (t_{i,j} -  \text{softmax}(d_{i,j}))
\end{align*}
In order to optimize, we need to see the above partial derivative to be 0:
\begin{align*}
\frac{\partial L(i)}{\partial {d_{i,j}}} = - (t_{i,j} -  \text{softmax}(d_{i,j})) = 0
\end{align*}
In addition, the corresponding second partial derivative denoted as $\frac{\partial L(i)}{\partial {d^2_{i,j}}}$ manifests as follows:
\begin{align*}
\frac{\partial L(i)}{\partial {d^2_{i,j}}} = \text{softmax}(d_{i,j})(1- \text{softmax}(d_{i,j}))
\end{align*}
As $\text{softmax}(d_{i,j})$ takes values within the open interval (0,1), it follows that $\frac{\partial L(i)}{\partial {d^2_{i,j}}}$ is always positive. Consequently, the global optimum is global minimum.\\
Furthermore, when it comes to optimum:
\begin{align*}
t_{i,j} &= \text{softmax}(d_{i,j}) \\
d_{i,j} &= \log(t_{i,j}) + \log \left( \sum_{\substack{1\leq l \leq |\mathcal{M}|}} e^{d_{i,j}} \right)
\end{align*}
It is easy to show that when it reaches optimum, $d_{i,j}$ is consistent with target similarity metric $t_{i,j}$. Without loss of generosity, suppose $t_{i,j} > t_{i,j'}$ :
\begin{align*}
d_{i,j} - d_{i,j'} &= \log(t_{i,j}) + \log \left( \sum_{\substack{1\leq l \leq |\mathcal{M}|}} e^{d_{il}} \right) - \left( \log(t_{i,j'}) + \log \left( \sum_{\substack{1\leq l \leq |\mathcal{M}|}} e^{d_{il}} \right) \right) \\
&= \log(t_{i,j}) - \log(t_{i,j'}) \\
&= \log\left( \frac{t_{i,j}}{t_{i,j'}} \right) > 0
\end{align*}
\end{proof}

\subsection{Guarantee of Sum of Fused Multimodal Similarity}
\label{appendix:guarantee-fusion-sum}
Given sets of uni-modal generalized similarity $\{t^{R}\}$ and $\sum w_{t^{R}} = 1$, the sum of fused multimodal similarity also equals 1, as demonstrated below:
\begin{align*}
\sum (t_{i,j}^{R})
& = \sum \sum(w_{R} \cdot t^{R}_{i,j})\\
& = \sum (w_{R}  \sum t^{R}_{i,j})\\
& = \sum w_{R} \cdot 1 = 1
\label{equ:graph-guide-zhou-sum}
\end{align*}

\section{Revisiting Target Similarity Settings}
\subsection{Encoders \& Packages}
\label{app:prefix-encoders}
To derive the target similarities, we need to reply on pre-trained encoders or well-defined packages as follows \textcolor{black}{in Table~\ref{tab:encoder_similarity}}:

\begin{table}[h]
\label{tab:encoder_similarity}
\begin{center}
\begin{small}
\caption{Encoders and packages used to produce self-similarities}
\begin{tabular}{l|c|c|c}
\hline
Unimodal & Representation & Encoder/Package & Pre-trained Source \\ 
\hline
Image & 2D image & CNN & Img2mol ~\citep{clevert2021img2mol} \\
SMILES & Sequence &Transformer  & CReSS ~\citep{yang2021cross}  \\
\textsuperscript{13}CNMR Spectrum& Sequence & 1D CNN & AutoEncoder ~\citep{costanti2023deep}  \\
\textsuperscript{13}CNMR peak & Scalar & NMRShiftDB2 ~\citep{steinbeck2003nmrshiftdb} & N/A  \\
Fingerprint & Sequence & RDKit ~\citep{landrum2006rdkit}  & N/A  \\
\hline
\end{tabular}
\end{small}
\end{center}
\end{table}

\textcolor{black}{We selected GIN ~\cite{xu2018powerful} as the graph encoder for Smile, Image, Fingerprint, NMR, and NMR-Peak, respectively. \textcolor{black}{All modalities share a consistent structure, each with 5 layers, an embedding dimension of 128, and a projection dimension of 512.} In addition, during pretraining, contrastive learning is performed on each modality independently. For instance, if molecule A only possesses SMILE, IMAGE, and Fingerprint data, but lacks NMR information, it will be included in the training for contrastive learning on SMILE, IMAGE, and Fingerprint, but not on NMR. In contrast, for early fusion, all modalities must be present for the included molecules.}



\subsection{Target Similarity at Graph Level}
\label{appendix:uni-modal-self}
\textbf{Fingerprint.} The mathematical formula of fingerprint similarity, denoted as $S_{i,j}^{F}$, can be viewed as follows:
\begin{align}
   S_{i,j}^{F} & = Tanimoto(A, B) = \frac{|A \cap B|}{|A \cup B|}  
\end{align}
where \( A \) and \( B \) are sets of molecular fragments for molecule $i$ and $j$, and \( |A \cap B| \) and \( |A \cup B| \) denote the size of their intersection and union, respectively.

\textbf{Image.} The self-similarity for Image, denoted as $S_{i,j}^{I}$, can be defined as follows:
\begin{equation}
    S_{i,j}^{I} = Cos( \mathcal V_{i}, \mathcal V_{j}) = \frac{\mathcal{V}_i \cdot \mathcal{V}_j^T}{\|\mathcal{V}_i\| \cdot \|\mathcal{V}_j\|}
\end{equation}
where $\mathcal V_{i},  \mathcal V_{j} $ represents the embedding of Image for two given molecules.

\textbf{NMR Spectrum.} The self-similarity for NMR spectrum, denoted as $S_{i,j}^{C}$, can be defined as follows:
\begin{equation}
    S_{i,j}^{C} = Cos( \mathcal V_{i}, \mathcal V_{j}) = \frac{\mathcal{V}_i \cdot \mathcal{V}_j^T}{\|\mathcal{V}_i\| \cdot \|\mathcal{V}_j\|}
\end{equation}
where $\mathcal V_{i},  \mathcal V_{j} $ represents the embedding of NMR spectra for two given molecules.

\textbf{Smiles.} The self-similarity for Smiles, denoted as $S_{i,j}^{S}$, can be defined as follows:
\begin{equation}
    S_{i,j}^{S} = Cos( \mathcal V_{i}, \mathcal V_{j}) = \frac{\mathcal{V}_i \cdot \mathcal{V}_j^T}{\|\mathcal{V}_i\| \cdot \|\mathcal{V}_j\|}
\end{equation}
where $\mathcal V_{i},  \mathcal V_{j} $ represents the embedding of Smiles for two given molecules.

\textbf{NMR Peak} The similarity among nodes (atoms) is derived from the positions of their signal peaks on $^{13}$C NMR spectra, measured in parts per million (ppm). The ppm values are continuous, typically ranging from 0 to 200 (see more introduction of ppm in Appendix \ref{appendix:ppm}). The self-similarity of NMR peaks $S^{P}_{l,m}$ can be defined as following:
\begin{equation}
    S^{P}_{l,m} = \frac{\tau_{2}}{|ppm_{l} - ppm_{m}|+\tau_{1}}
\end{equation}
where $ppm_{l}$ and $ppm_{m}$ are the positions of NMR peaks for the $l^{th}$, $m^{th}$ Carbon atom, $\tau_{1}$ and $\tau_{2}$ are temperature hyper-parameter. \textcolor{black}{Additionally, we conducted an ablation study to examine the impact of different temperature combinations ($\tau_1$ and $\tau_2$) on peak self-similarity and its effect on model performance for \textit{Atom Alignment with Peak Accuracy}. Our findings indicate that the following combinations do not significantly affect performance, as shown in Table~\ref{tab:tau_ablation}. For this analysis, we fixed the GIN depth at 5, set the GIN embedding dimensionality to 128, and kept the projection dimension at 512. The results suggest that the best performance occurs when $\tau_1 = 10^{-5}$ and $\tau_2 = 10^1$.}

\begin{table}[h]
\centering
{
\caption{Ablation study of Effect of different $\tau_1$ and $\tau_2$ combinations on accuracy.}
\label{tab:tau_ablation}
\begin{tabular}{ccc}
\toprule
\textbf{$\tau_1$} & \textbf{$\tau_2$} & \textbf{Accuracy (\%)} \\
\midrule
$10^{-1}$ & $10^{1}$ & 89.6 \\
$10^{-1}$ & $10^{2}$ & 89.8 \\
$10^{-1}$ & $10^{3}$ & 89.6 \\
$10^{-1}$ & $10^{4}$ & 88.9 \\
$10^{-1}$ & $10^{5}$ & 89.3 \\
$10^{-2}$ & $10^{1}$ & 89.8 \\
$10^{-2}$ & $10^{2}$ & 89.8 \\
$10^{-2}$ & $10^{3}$ & 88.8 \\
$10^{-2}$ & $10^{4}$ & 87.2 \\
$10^{-2}$ & $10^{5}$ & 89.4 \\
$10^{-3}$ & $10^{1}$ & 89.2 \\
$10^{-3}$ & $10^{2}$ & 89.1 \\
$10^{-3}$ & $10^{3}$ & 89.0 \\
$10^{-3}$ & $10^{4}$ & 89.7 \\
$10^{-3}$ & $10^{5}$ & 89.4 \\
$10^{-4}$ & $10^{1}$ & 89.8 \\
$10^{-4}$ & $10^{2}$ & 89.7 \\
$10^{-4}$ & $10^{3}$ & 89.8 \\
$10^{-4}$ & $10^{4}$ & 89.5 \\
$10^{-4}$ & $10^{5}$ & 88.4 \\
\textbf{$10^{-5}$} & \textbf{$10^{1}$} & \textbf{90.0} \\
$10^{-5}$ & $10^{2}$ & 89.5 \\
$10^{-5}$ & $10^{3}$ & 89.6 \\
$10^{-5}$ & $10^{4}$ & 89.7 \\
$10^{-5}$ & $10^{5}$ & 89.7 \\
\bottomrule
\end{tabular}
}
\end{table}

\subsection{A Brief Introduction to PPM for NMR Peak}
\label{appendix:ppm}
In chemistry, $^{13}$C NMR stands out as a common technique for structural analysis by revealing molecular structures by elucidating the chemical environments of carbon atoms and their magnetic responses to external fields ~\citep{gerothanassis2002nuclear,lambert2019nuclear}. It quantifies these features in parts per million (ppm) relative to a reference compound, such as tetramethylsilane (TMS), thereby simplifying comparisons across experiments. As a result, the continuous peak positions, measured in parts per million (ppm), offer a robust knowledge span—a natural ordering metric that can be employed to derive measures of similarity ~\citep{xu2023molecular}.

\subsection{Configuration of Early Fusion}
\label{appendix:config-fusion}
A simple linear combination is used to formulate the multimodal relational similarity $t_{i,j}^{M}$ between the $i^{th}$ and $j^{th}$ molecules, represented as as follows:
\begin{align}
t_{i,j}^{M} = w_{SM} \cdot t^{SM}_{i,j} + w_{{C}} \cdot t^{C}_{i,j} + w_{{I}} \cdot t^{I}_{i,j} + w_{{F}} \cdot t^{F}_{i,j} + w_{{F}} \cdot t^{F}_{i,j} + w_{{P}} \cdot t^{P}_{i,j}
\label{equ:graph-guide-zhou}
\end{align}
where $t^{SM}_{i,j}$ denotes the similarity based on SMILES, $t^{C}_{i,j}$ denotes the similarity with respect to $^{13}$C NMR spectrum, $t^{I}_{i,j}$ denotes the similarity regarding images, $F$ denotes the similarity based on fingerprints, and $P$ denotes the similarity based on fingerprints. $w_{SM}$, $w_{C}$, $w_{I}$, and $w_{F}$ are the pre-defined weights for their respective similarity, and $w_{SM} + w_{{C}} + w_{{I}} + w_{{F}} + w_{{P}} = 1$. The ablation study about how the weight combinations influence the performance of early fusion is shown in Table~\ref{tab:modality_weights}

\begin{table*}[h]
\centering
\tiny
\setlength{\tabcolsep}{2pt}
\caption{\zzy{Performance of different modality weight combinations across datasets in early fusion.}}
\label{tab:modality_weights}
\zzy{
\begin{tabular}{ccccc|ccccccccccc}
\toprule
\multicolumn{5}{c|}{\textbf{Modality Weight}} & \multicolumn{11}{c}{\textbf{Dataset}} \\
\textbf{Smiles} & \textbf{NMR} & \textbf{Image} & \textbf{FP} & \textbf{Peak} & \textbf{BBBP} & \textbf{BACE} & \textbf{SIDER} & \textbf{CLINTOX} & \textbf{HIV} & \textbf{MUV} & \textbf{TOX21} & \textbf{TOXCAST} & \textbf{ESOL} & \textbf{FREESOLV} & \textbf{LIPO} \\
\midrule
1 & 0   & 0   & 0   & 0 & {92.9$\pm$1.5} & 90.9$\pm$3.3 & 64.9$\pm$0.3 & 78.2$\pm$1.9  & \textbf{83.3$\pm$1.1} & 80.1$\pm$2.5 & \underline{85.7$\pm$1.2} &70.5$\pm$2.5 & 0.811$\pm$ 0.109 & \underline{1.623$\pm$ 0.168} & \underline{0.539$\pm$ 0.017}\\
0 & 1   & 0   & 0   & 0  & 91.0$\pm$2.0 & \underline{93.2$\pm$2.7} & \textbf{68.1$\pm$1.5} & \textbf{87.7$\pm$6.5}  & 80.9$\pm$5.0 & \underline{80.9$\pm$5.0} &  85.1$\pm$0.4 & \textbf{71.1$\pm$0.8} & 0.844$\pm$ 0.123 & 2.417$\pm$ 0.495 & 0.609$\pm$ 0.031 \\
0 & 0   & 1   & 0   & 0 & \underline{93.1$\pm$2.4} & {92.9$\pm$1.8} & 65.3$\pm$1.5 & 86.2$\pm$6.5 & \underline{82.3$\pm$0.6} & 78.7$\pm$1.7 & \textbf{86.0$\pm$1.0} & \underline{71.0$\pm$1.6} & \textbf{0.761$\pm$ 0.068} & 1.648$\pm$ 0.045 & \textbf{0.537$\pm$ 0.005} \\
0 & 0   & 0   & 1   & 0 & 92.9$\pm$2.3 & 91.7$\pm$3.6  & {65.6$\pm$0.7} & \underline{87.5$\pm$6.0} & 81.2$\pm$2.5 & \textbf{82.9$\pm$3.1} & 85.3$\pm$1.3 &70.0$\pm$1.4 & \underline{0.808$\pm$ 0.071} & \textbf{1.437$\pm$ 0.134} & 0.565$\pm$ 0.017\\
0 & 0   & 0   & 0   & 1 & \textbf{93.4$\pm$2.7} & 89.3$\pm$1.7 & 62.8$\pm$2.1 & 86.1$\pm$5.4& 82.1$\pm$0.4 &75.4$\pm$5.2 &84.9$\pm$1.0 & 70.6$\pm$0.8 & 0.924$\pm$0.083 &1.707$\pm$0.126 & 0.587$\pm$0.021\\
0.2 & 0.2 & 0.2 & 0.2 & 0.2 & 91.6$\pm$5.0 & \textbf{94.3$\pm$2.4} & \underline{66.4$\pm$1.9} & 85.3$\pm$6.8 & 82.0$\pm$4.2 & 80.6$\pm$3.2 & 85.2$\pm$0.6 & 69.8$\pm$1.1 & 1.037$\pm$0.090 & 2.093$\pm$0.090 & 0.607$\pm$0.034 \\
\bottomrule
\end{tabular}
}
\end{table*}

\section{Experimental Settings}
\label{appendix:exp-setting}
\subsection{Pre-Training Setting}
During pretraining, we utilized an Adam optimizer with a learning rate set to 0.001, spanning 200 epochs and employing a batch size of 256. The model was trained on around 25,000 data points. The NMR data were experimental data, extracted from NMRShiftDB2 ~\citep{steinbeck2003nmrshiftdb}. Other chemical modalities, such as images, fingerprints and graphs, were produced from SMILES by RDKit ~\citet{landrum2006rdkit}.


\subsection{Fine-Tuning Setting}

\subsubsection{Datasets} 

For fine-tuning, our model was trained on 11 drug discovery-related benchmarks sourced from MoleculeNet ~\citep{wu2018moleculenet}. Eight of these benchmarks were designated for classification downstream tasks, including BBBP, BACE, SIDER, CLINTOX, HIV, MUV, TOX21, and ToxCast, while three were allocated for regression tasks, namely ESOL, Freesolv, and Lipo. The datasets were divided into train/validation/test sets using a ratio of 80\%:10\%:10\%, accomplished through the scaffold splitter ~\citep{halgren1996merck, landrum2006rdkit} from Chemprop ~\citep{yang2019analyzing, heid2023chemprop}, like previous works. The scaffold splitter categorizes molecular data based on substructures, ensuring diverse structures in each set. Molecules are partitioned into bins, with those exceeding half of the test set size assigned to training, promoting scaffold diversity in validation and test sets. Remaining bins are randomly allocated until reaching the desired set sizes, creating multiple scaffold splits for comprehensive evaluation.


\subsubsection{Baselines}
\label{appendix:baselines}
We systematically compared MMFRL's performance with various state-of-the-art baseline models across different categories. In the realm of supervised models, AttentiveFP ~\citep{xiong2019pushing} and DMPNN ~\citep{yang2019analyzing} stand out by leveraging graph attention networks and node-edge interactive message passing, respectively. The unsupervised learning method N-Gram ~\citep{liu2019n} employs graph embeddings and short walks for graph representation. Predictive self-supervised learning methods, such as GEM ~\citep{fang2022geometry} and Uni-Mol ~\citep{Zhou2023UniMolAU}, are specifically designed for predicting molecular geometric information. Moreover, our evaluation encompasses a range of contrastive learning methods, namely InfoGraph ~\citep{sun2019infograph}, GraphCL ~\citep{you2020graph}, MolCLR ~\citep{wang2022molecular}, and GraphMVP ~\citep{liu2022pretraining}, all serving as essential baselines. The baseline results are collected from recent works ~\citep{fang2022geometry, Zhou2023UniMolAU, moon20233d, fang2023knowledge}.

\subsubsection{Evaluation} 
To assess the effectiveness of our fine-tuned model, we measure the ROC-AUC for classification downstream tasks, and the root mean squared error (RMSE) metric for regression tasks. In order to ensure a fair and robust comparisons, we conduct three independent runs using three different random seeds for scaffold splitting across all datasets. The reported performance metrics are then averaged across these runs, and the standard deviation is computed as prior works. In paritcular, the random selected seeds for respective experiments are drawn from the range between 0 and 20.

\end{document}